\definecolor{red}{rgb}{1,0,0}
\definecolor{blue}{rgb}{0,0,1}
\newtheorem{theorem}{Theorem} 
\newtheorem{lemma}[theorem]{Lemma}
\newtheorem{corollary}[theorem]{Corollary}
\newtheorem{proposition}[theorem]{Proposition}
\newtheorem{algorithm}[theorem]{Algorithm}
\newtheorem{example}[theorem]{Example}
\newtheorem{remark}[theorem]{Remark}
\DeclareMathOperator{\Ap}{Ap}
\def\N{{\mathbb N}}
\newcommand\GAP{\textsf{GAP}}
\title[The second Feng-Rao number of inductive semigroups]{The second Feng-Rao number for codes coming from inductive semigroups}
\author{J. I. Farr\'{a}n}
\address{Departamento de Matem\'atica Aplicada, 
Universidad de Valladolid, Campus de Segovia, Espa\~na}
\email{jifarran@eii.uva.es}
\thanks{The first author is partially supported by the project MTM2012-36917-C03-01.}
\author{P. A. Garc\'{\i}a-S\'{a}nchez}
\address{Departamento de \'Algebra and CITIC-UGR, Universidad de Granada, 18071 Granada, Espa\~na}
\email{pedro@ugr.es}
\thanks{The second author is partially supported by the project MTM2010-15595, FQM-343,  FQM-5849 and FEDER funds.}
\keywords{AG codes, towers of function fields, generalized Hamming weights, order bounds, Feng-Rao numbers, inductive numerical semigroups, Ap\'{e}ry sets.}
\subjclass[2010]{11T71, 20M14, 11Y55}
\begin{document}

\begin{abstract}
The second Feng-Rao number of every inductive numerical semigroup is explicitly computed. This number determines the asymptotical behaviour of the order bound for the second Hamming weight of one-point AG codes. In particular, this result is applied for the codes defined by asymptotically good towers of function fields whose Weierstrass semigroups are inductive. In addition, some properties of inductive numerical semigroups are studied, the involved Ap\'{e}ry sets are computed in a recursive way, and some tests to check whether a given numerical semigroups is inductive or not are provided. 
\end{abstract}

\maketitle

\section{Introduction}\label{sec:intro}

Garc\'{\i}a and Stichtenoth introduced in~\cite{GS} an asymptotically good tower of function fields, attaining the Drinfeld-Vl\u{a}du\c{t} bound given in~\cite{DV}. Such a tower allows to construct asymptotically good sequences of error-correcting codes, beyond the Gilbert-Varshamov bound (see~\cite{TsfVla}). The involved codes in this construction are one-point Algebraic Geometry codes (AG codes in short). These codes are based on algebraic curves $\chi$\/, and they are constructed by evaluating rational functions $\varphi$ with only one pole $Q$ at sufficiently many rational points $P_1,\ldots,P_n$ (see \cite{HvLP} and \cite{Stichtenoth} for a modern approach). Moreover, one-point AG codes can be efficiently decoded by the so-called Feng-Rao algorithm introduced in~\cite{FR}. This method corrects up to half the so-called Feng-Rao distance $\delta_{RF}(m)$ (also called {\em order bound}\/ in the literature), which can be obtained by numerical computations in the Weierstrass semigroup of $\chi$ at $Q$. 

In fact, the Feng-Rao distance improves the lower bound for the minimum distance given by the Riemann-Roch theorem, that is 
\[
\delta_{FR}(m)\geq m+1-2g
\]
for $m>2g-2$, where $m+1-2g$ is called the Goppa distance, $g$ being the genus of the underlying curve $\chi$\/, and $m$ the maximum pole order of a function $\varphi$ used to evaluate (see~\cite{HvLP} for further details). Moreover, the equality holds for $m>>0$ sufficiently large. 

Even though the Feng-Rao distance was introduced for Weierstrass semigroups and with decoding purposes, it is just a combinatorial concept that can be computed for arbitrary numerical semigroups, so that it can be computed just with numerical semigroup techniques like Ap\'{e}ry sets (see~\cite{WSPink}). The computation of Feng-Rao distances has been studied in the literature for different types of numerical semigroups (see~\cite{WSPink},~\cite{Arf},~\cite{Oneto} or~\cite{KirPel}). Later on, the concept of minimum distance for an error-correcting code has been generalized to the so-called {\em generalized Hamming weights} and the {\em weight hierarchy}\/.  These concepts were independently introduced by Helleseth et al. in~\cite{HKM} and Wei in~\cite{Wei} for applications in coding theory and cryptography, respectively. 

The Feng-Rao distance has been generalized in a natural way to higher weights (see~\cite{HeiPel}).  The obtained generalized Feng-Rao distances (or {\em generalized order bounds}), defined on the underlying numerical semigroup for an array of codes (or a weight function, in a modern setting), become again lower bounds for the corresponding generalized Hamming weights.  However, the computation of these generalized Feng-Rao distances is a much more hard problem than in the classical case. This means that very few results are known about this topic, and they are completely scattered in the literature (see for example~\cite{Bras},~\cite{fr-intervalos},~\cite{fr-IEEE},~\cite{JMDA},~\cite{WCC}, or~\cite{HeiPel}). 

This paper focuses on the asymptotical behaviour of the second Feng-Rao distance, that is, $\delta_{FR}^{2}(m)$ for $m>>0$ large enough. In fact, it was proved in~\cite{WCC} that
\[
\delta_{FR}^{r}(m)=m+1-2g+E_{r}
\]
and for $r=2$ in particular (details are made precise in the next section). The number $E_{r}\equiv \mathrm E(\Gamma,r)$ is called the $r$th Feng-Rao number of the semigroup $\Gamma$, and they are unknown but for very few semigroups and concrete $r$'s.  For example, it was proved in~\cite{fr-IEEE} that
\begin{equation}
\textrm E(S,r)=\rho_{r}
\end{equation}
for semigroups with only two generators. In~\cite{fr-intervalos} the authors also compute the Feng-Rao numbers for numerical semigroups generated by intervals. Note that the knowledge of $E_2$ provides a lower bound for $\delta_{FR}^{2}(m)$, namely 
\[
\delta_{FR}^{2}(m)\geq m+1-2g+E_{2}
\]
for $m\geq c$, $c$ being the conductor of the involved semigroup. In particular, we get a lower bound for the generalized Hamming weights in an array of codes whose associated semigroup is such a $\Gamma$. 
Our work is addressed to compute the second Feng-Rao number for inductive semigroups (see~\cite{PST} and \cite{PT}), by computing cardinalities of certain Ap\'{e}ry sets. 
This computation has an application to the tower of function fields introduced in~\cite{GS}. 

The paper is organized as follows. Section~\ref{sec:background} sets the general definitions concerning numerical semigroups, Feng-Rao distances, Feng-Rao numbers and inductive semigroups. The main Section \ref{sec:main} is devoted to calculate the cardinalities of Ap\'{e}ry sets for arbitrary inductive numerical semigroups and a explicit formula for the second Feng-Rao number for such a semigroup. As an application, we compute the second Feng-Rao number for every inductive semigroup involved in the tower of function fields given in~\cite{GS}, and furthermore we show how to compute the Ap\'{e}ry sets explicitly, not only their cardinalities, and the genus of an arbitrary inductive semigroup. Section \ref{sec:patterns} studies some extra properties of inductive semigroups, such as saturation, admissible patterns, and the embedding dimension. 
The paper ends with some examples and conclusions in Section~\ref{sec:examples_conclusions}.

\section{Inductive semigroups and Feng-Rao numbers}\label{sec:background}

This section presents some preliminary concepts on numerical semigroups, Feng-Rao distances and numbers, and inductive semigroups. 
We first recall the fundamentals of numerical semigroups. We will follow the notation in \cite{NS}.

Let $\Gamma$ be a numerical semigroup, that is, a submonoid of $(\N,+)$ 
with $\sharp(\N\setminus\Gamma)<\infty$ and $0\in\Gamma$ ($\mathbb N$ denotes the set of nonnegative integers). 
Denote by $g:=\sharp(\N\setminus\Gamma)$ the {\em genus} of $\Gamma$, and let $c\in \Gamma$ be 
the {\em conductor} of $\Gamma$\/, that is, the (unique) element in $\Gamma$ such that $c-1\notin\Gamma$ 
and $c+\mathbb N\subseteq \Gamma$. The elements of the set $\N\setminus\Gamma$ are called the {\em gaps} of $\Gamma$. The set $\Gamma\cap [0,c]$ is usually known as the \emph{set of small elements} (or sporadic elements).

%Note that if $\Gamma$ is the Weierstrass semigroup of a curve $\chi$ at a point $P$\/, $g$ equals to the 
%geometric genus of $\chi$\/, and the elements of $G(\Gamma):=\N\setminus\Gamma$ 
%are called the Weierstrass gaps at $P$\/. For an arbitrary semigroup, these elements are simply called gaps. 

It is well known (see for instance \cite[Lemma 2.14]{NS}) that $c\leq 2g$\/, and hence the \lq\lq last gap\rq\rq \ of $\Gamma$ 
is $c-1\leq 2g-1$. The number $c-1$ is called the {\em Frobenius number} of $\Gamma$\/. 
The {\em multiplicity} of $\Gamma$ is the least positive integer belonging to $\Gamma$. 

We say that a numerical semigroup $\Gamma$ is generated by a set of elements
$G\subseteq\Gamma$ if every element $x\in\Gamma$ can be written as a linear combination
\[
x=\displaystyle\sum_{g\in G}\lambda_{g}g,
\]
where $\lambda_{g}\in\N$ for all $g$ and only finitely many of them are non-zero.
In fact, it is classically known that every numerical semigroup
is finitely generated, so that we can always find a finite set $G$ generating $\Gamma$. Notice that we need at most one generator in each congruence class modulo the multiplicity of $\Gamma$.
Furthermore, every generating set contains the set of irreducible elements:
an integer $x\in\Gamma^*$ is \emph{irreducible} if whenever $x=u+v$ for some $u,v\in\Gamma$, we have that  $u\cdot v=0$ (as usual $\Gamma^*$ denotes $\Gamma\setminus \{0\}$). The set of irreducibles corresponds with the set $\Gamma^*\setminus (\Gamma^*+\Gamma^*)$, and it is indeed the unique minimal generating system of $\Gamma$. The cardinality of the minimal generating set of $\Gamma$ is  the {\em embedding dimension} of $\Gamma$ (more details in \cite{NS}).
One typically supposes that $\Gamma$ is minimally generated by $\{n_1<\cdots<n_e\}$, where $e$ is the embedding dimension (we use the notation $<$ in sets to denote that the elements are ordered in that way).  

If we enumerate the elements of $\Gamma$ in increasing order
\[
\Gamma=\{\rho_1=0<\rho_2<\cdots\},
\]
we note that every $x\geq c$ is the $(x+1-g)$th element of $\Gamma$\/,
that is $x=\rho_{x+1-g}\,$. With this notation, $\rho_2$ is the multiplicity of $\Gamma$. 

Finally, if $n\in\mathbb{Z}$ is any integer, we define the Ap\'{e}ry set of the semigroup $\Gamma$ related to $n$ as 
\[
\Ap(\Gamma,n) = \{ x\in\Gamma\;|\; x-n\notin\Gamma \}.
\]
It is known that $\sharp\Ap(\Gamma,n)=n$ if and only if $n\in\Gamma$ (see \cite[Proposition 1]{cyclotomic}). In this case, the set 
\[
(\Ap(\Gamma,n) \setminus \{0\}) \cup \{n\}
\]
is a generating system of $\Gamma$ with very nice properties (see for example \cite{NS}).  If $n$ is a gap of $\Gamma$, then $\sharp\Ap(\Gamma,n)>n$.

\subsection{Feng-Rao numbers}

Next we introduce the definitions of the generalized Feng-Rao distances, following the notations in \cite{fr-intervalos}.  Let $\Gamma$ be a numerical semigroup. 

\begin{enumerate}[(a)]

\item Given $x\in\Gamma$, we say that $\alpha\in\Gamma$ is a \emph{divisor} of $x$ if $x-\alpha\in\Gamma$ (in the literature, sometimes this fact is denoted by $a\le_\Gamma x$). Denote by $\mathrm D(x)=\{\alpha\in\Gamma\mid x-\alpha \in\Gamma\}$ the set of divisors of $x$. 

\item For $m_1\in\Gamma$, let $\nu(m_1):=\sharp \mathrm D(m_1)$. The (classical) {\em Feng-Rao distance} of $\Gamma$ is defined by the function
\[
\begin{array}{rcl}
\delta_{FR}\;:\;\Gamma&\longrightarrow&\N, \\
m&\mapsto&\delta_{FR}(m):=\min\{\nu(m_1)\mid m_1\geq m,\;\;m_1\in\Gamma\}.
\end{array}
\]
\end{enumerate}

There are some well-known facts about the functions $\nu$ and $\delta_{FR}$ for an arbitrary semigroup $\Gamma$
(see for example \cite{WSPink}, \cite{HvLP} or \cite{KirPel}).
An important result is that
$\delta_{FR}(m)\geq m+1-2g$ for all $m\in\Gamma$ with $m\geq c$\/,
and that equality holds if moreover $m\geq 2c-1$. 

In the sequel, we simplify the notation by writing $\delta(m)$ for $\delta_{FR}(m)$. 
The classical Feng-Rao distance corresponds to $r=1$ in the following definition.

Let $\Gamma$ be a numerical semigroup, $m\in\Gamma$ and $r\geq 1$. 
Set $\mathrm D(m_1,\ldots,m_r)=\mathrm D(m_1)\cup\cdots\cup \mathrm D(m_r)$. 
The \emph{$r$th Feng-Rao distance} of $\Gamma$ is defined by the function 
\[
\begin{array}{rcl}
{\delta^{r}}: \Gamma&\longrightarrow&\N, \\
m&\mapsto&\delta^{r}(m)=\min\left\{\sharp \mathrm D(m_1,\ldots,m_r) ~\middle|~ 
 m_1,\ldots,m_r\in \Gamma, \ m\le m_1<\cdots < m_r
 \right\}.
\end{array}\]

We know the asymptotic values of $\delta^r$ (see \cite{WCC}): there exists a certain constant $\mathrm E(\Gamma,r)$, 
depending on $r$ and $\Gamma$, such that 
\[
\delta^{r}(m)=m+1-2g+\mathrm E(\Gamma,r),
\]
for $m\geq 2c-1$. 
This constant $\mathrm E(\Gamma,r)$ is called the \emph{$r$th Feng-Rao number} of the semigroup $\Gamma$. 
In fact, it is also true that $\delta^{r}(m)\geq m+1-2g+\mathrm E(\Gamma,r)$ for $m\geq c$ (see \cite{WCC}). 

Feng-Rao numbers are known for numerical semigroups generated by intervals (\cite{fr-intervalos}) and for numerical semigroups with embedding dimension two (\cite{fr-IEEE}).

In this paper, we will focus on the second Feng-Rao number for inductive semigroups. It is easy to check from \cite[Section 4]{WCC} that 
\begin{equation}\label{calc-e2}
\mathrm E(\Gamma,2) = \min \{ \sharp\Ap(\Gamma,x) \mid 1\leq x\leq \rho_2 \}.
\end{equation}
Following \cite{WCC}, for the numerical semigroup $\Gamma$ and $x\in\mathbb{Z}$, we use the simplified notation
\[S_x=\Ap(\Gamma,x).\]

\subsection{Inductive semigroups}

We now recall the definition of inductive numerical semigroup. 

Let $a=(a_1,\ldots, a_n), b=(b_1,\ldots, b_n)\in \mathbb N^n$ with $a_i\ge 2$ and $b_{i+1}\ge a_ib_i$ for all $i\in\{1,\ldots, n\}$. Define 
\begin{itemize}
\item $\Gamma_0=\mathbb N$,
\item $\Gamma_i=a_i\Gamma_{i-1}\cup (a_ib_i+\mathbb N)$, for $i\in \{1,\ldots, n\}$.
\end{itemize}
We say that a numerical semigroup in inductive if it is of the form $\Gamma_n$. We will also write $\Gamma(a,b)=\Gamma_n$. In particular, an {\em ordinary} semigroup $S=\{0,c\rightarrow\}$ is inductive, with $n=1$, $a_{1}=c$ and $b_{1}=1$. 

For our purpose, we exclude the trivial case $\Gamma_0=\mathbb N$, so that we assume from now on that $n\geq 1$. For such an inductive numerical semigroup, set $\lambda_1=b_1$ and $\lambda_{i+1}=b_{i+1}-a_ib_i$ for $i\in\{2,\ldots,n-1\}$. Conversely, from the sequences $(a_1,\ldots, a_n)$ and $(\lambda_1,\ldots,\lambda_n)$ we can retrieve $b_{1}=\lambda_{1}$ and $b_{i+1}=\lambda_{i+1}+a_i b_i$. 

For $i\in \{1,\ldots,n\}$, define $A_i=\prod_{j=i}^n a_i$. Notice that $A_{1}$ is the multiplicity of $\Gamma_n$, and that $1<A_n<\cdots<A_1$. Denote by $I_{n}=\{1,\ldots,A_1\}$ the \lq\lq fundamental interval\rq\rq \ of $\Gamma_n$. Whenever it is necessary, we use the notation $A_{i}^{(k)}=\prod_{j=i}^k a_i$ with $i\leq k$ to specify that we are in the semigroup $\Gamma_k$ for $k\leq n$, instead of $\Gamma_n$. In this way, $A_{1}^{(k)}$ is precisely the multiplicity of $\Gamma_k$. When there is no possible misunderstanding, we will omit the superindex of $A_i^{(k)}$ and write simply $A_i$.   With the above notation one recovers the following known fact. 

\begin{lemma}\label{partition}
The numerical semigroup $\Gamma$ is a disjoint union of the following sets:
\begin{enumerate}[$\bullet$]
\item $\Lambda^1=\{0,A_1,2A_1, \ldots, \lambda_1 A_1\}$,
\item $\Lambda^2=b_1A_1+\{A_2,2A_2,\ldots,\lambda_2A_2\}$,
\item \dots
\item $\Lambda^{n}=b_{n-1}A_{n-1}+\{A_n,2A_n,\ldots,\lambda_nA_n\}$,
\item $\Lambda^{n+1}=(a_nb_n+1)+\mathbb N$. 
\end{enumerate}
\end{lemma}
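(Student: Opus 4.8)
The plan is to argue by induction on $n$. For the base case $n=1$ we have $\Gamma_1=a_1\mathbb N\cup(a_1b_1+\mathbb N)$ with $A_1=a_1$ and $\lambda_1=b_1$. The multiples of $a_1$ lying in $[0,a_1b_1]$ are exactly $\Lambda^1=\{0,a_1,\ldots,b_1a_1\}$, while every element of $\Gamma_1$ strictly beyond $a_1b_1$ belongs to $a_1b_1+\mathbb N$. Writing $a_1b_1+\mathbb N=\{a_1b_1\}\cup\Lambda^2$ with $\Lambda^2=(a_1b_1+1)+\mathbb N$ and noting $a_1b_1\in\Lambda^1$ yields the disjoint union $\Gamma_1=\Lambda^1\cup\Lambda^2$ with $\Lambda^1\subseteq[0,a_1b_1]$ and $\Lambda^2\subseteq[a_1b_1+1,\infty)$.

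For the inductive step I would apply the statement to the inductive semigroup $\Gamma_{n-1}$ built from $(a_1,\ldots,a_{n-1})$ and $(b_1,\ldots,b_{n-1})$, denoting its blocks by $\Lambda^1_{n-1},\ldots,\Lambda^{n}_{n-1}$, where the final (tail) block is $\Lambda^{n}_{n-1}=(a_{n-1}b_{n-1}+1)+\mathbb N$. The essential bookkeeping consists of the scaling identity $A_i=a_nA_i^{(n-1)}$ for $i\le n-1$ together with $A_n=a_n$, and the observation that each $\lambda_i$ depends only on $a_1,\ldots,a_i$ and $b_1,\ldots,b_i$ and is therefore unchanged when passing from $\Gamma_{n-1}$ to $\Gamma_n$. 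Multiplying the induction hypothesis by $a_n$ shows that $a_n\Gamma_{n-1}$ is the disjoint union of $a_n\Lambda^i_{n-1}=\Lambda^i$ for $i\le n-1$ and the scaled tail $a_n\Lambda^{n}_{n-1}=\{a_n m\mid m\ge a_{n-1}b_{n-1}+1\}$.

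The heart of the argument is to split this scaled tail using the telescoping identity $b_{i-1}A_{i-1}+\lambda_iA_i=b_iA_i$, which follows from $A_{i-1}=a_{i-1}A_i$ and $\lambda_i=b_i-a_{i-1}b_{i-1}$. The first $\lambda_n$ elements of $a_n\Lambda^{n}_{n-1}$, namely those $a_n m$ with $a_{n-1}b_{n-1}+1\le m\le b_n$, coincide with the block $\Lambda^{n}=b_{n-1}A_{n-1}+\{A_n,\ldots,\lambda_nA_n\}$, whose largest element is $b_nA_n=a_nb_n$. The remaining elements $a_n m$ with $m\ge b_n+1$ satisfy $a_n m\ge a_nb_n+a_n\ge a_nb_n+1$ (using $a_n\ge 2$), so they all lie in $\Lambda^{n+1}=(a_nb_n+1)+\mathbb N$. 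Combining this with $a_nb_n+\mathbb N\subseteq\Lambda^{n}\cup\Lambda^{n+1}$ gives $\Gamma_n=a_n\Gamma_{n-1}\cup(a_nb_n+\mathbb N)=\Lambda^1\cup\cdots\cup\Lambda^{n}\cup\Lambda^{n+1}$.

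Finally I would verify disjointness. The sets $\Lambda^1,\ldots,\Lambda^{n-1}$ and the scaled tail are pairwise disjoint, being images of the pairwise disjoint blocks of $\Gamma_{n-1}$ under the injection $x\mapsto a_nx$; the blocks $\Lambda^1,\ldots,\Lambda^{n}$ occupy the increasing, non-overlapping ranges $\Lambda^1\subseteq[0,b_1A_1]$ and $\Lambda^i\subseteq[b_{i-1}A_{i-1}+A_i,\,b_iA_i]$ for $2\le i\le n$; and since all of these are contained in $[0,a_nb_n]$ whereas $\Lambda^{n+1}\subseteq[a_nb_n+1,\infty)$, the tail block is disjoint from the rest. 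I expect the main difficulty to be purely organizational, namely keeping the two indexings ($A_i$ versus $A_i^{(n-1)}$, and the renaming of the tail block) aligned so that the endpoints match exactly, rather than any substantive estimate.
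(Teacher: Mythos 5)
Your proof is correct and follows essentially the same route as the paper, whose entire proof consists of the single observation $\sum_{j=1}^{i}\lambda_j A_j = b_i A_i$ --- exactly the telescoping identity you identify as the heart of the argument --- so your induction on $n$ via $\Gamma_n = a_n\Gamma_{n-1}\cup(a_nb_n+\mathbb N)$ is just the natural way of making that one-liner precise. The only point worth tightening is the intermediate claim $a_nb_n+\mathbb N\subseteq\Lambda^{n}\cup\Lambda^{n+1}$, which fails when $\lambda_n=0$ (then $\Lambda^n=\emptyset$, a case that actually occurs in the paper's tower application for odd $n$); but in that case $b_n=a_{n-1}b_{n-1}$ is the conductor of $\Gamma_{n-1}$ and hence lies in one of its earlier blocks, so $a_nb_n$ lands in some $\Lambda^i$ with $i<n$ and the union is still complete.
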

\begin{proof}
It follows easily just observing that $\sum_{i=1}^i \lambda_j A_j=b_iA_i$.
\end{proof}

Note that, from the above result, it is clear that every inductive semigroup is acute, in the sense of \cite{Oneto}. In fact, every Arf semigroup is known to be acute (see~\cite{acute}), and inductive semigroups have the Arf property (see~\cite{Arf}). As we will see later, the main interest of these semigroups is that they appear in a natural way as the Weierstrass semigroups associated to certain towers of function fields (see \cite{GS}).

\subsubsection{Testing inductiveness}

Let $\Gamma$ be a numerical semigroup. The following naive procedure can be used to check if it is inductive.

Take  $S$ to be the small elements of $\Gamma$, and $a=\gcd(S)$. If $a=1$, then $\Gamma$ is inductive if and only if $\Gamma=\mathbb N$. Otherwise, let $\Lambda$ be the semigroup $S/a\cup (\max\{S/a\}+\mathbb N$). It follows that $\Gamma$ is inductive if and only if $\Lambda$ is inductive. 

Another way to proceed is to check if we can decompose $\Gamma$ as in Lemma \ref{partition} for suitable $A_n\mid A_{n-1}\mid \cdots \mid A_1$ (and suitable $b_1,\ldots, b_n$). In fact, we have the following characterization. 

\begin{proposition}\label{characterization}

For a numerical semigroup $\Gamma$, consider $\delta_i:=\rho_{i+1}-\rho_{i}$ ($i\geq 1$) the distance between two consecutive elements in $\Gamma$. Then $\Gamma$ is inductive if and only if $\delta_{i+1}\mid \delta_i$ for all $i\geq 1$. 

\end{proposition}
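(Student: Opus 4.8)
The plan is to prove both implications by working entirely with the gap sequence between consecutive elements $(\delta_i)_{i\ge 1}$ and comparing it with the difference sequence that an inductive semigroup is forced to have by Lemma~\ref{partition}. For the forward implication I would start from $\Gamma=\Gamma(a,b)$ and read $(\delta_i)$ off the partition directly. Using the identity $\sum_{j=1}^{i}\lambda_jA_j=b_iA_i$ established in that lemma, the maximum of $\Lambda^i$ is $b_iA_i$ and the minimum of the next nonempty block is obtained by adding $A_{i+1}$ (resp.\ $1$ for the tail $\Lambda^{n+1}$). Hence, traversing $\Gamma$ in increasing order, the difference sequence consists of $\lambda_1$ copies of $A_1$, then $\lambda_2$ copies of $A_2,\dots$, then $\lambda_n$ copies of $A_n$, and finally all $1$'s. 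Since $A_i=a_iA_{i+1}$ gives $A_{i+1}\mid A_i$ and $1\mid A_n$, any two consecutive values satisfy $\delta_{i+1}\mid\delta_i$ (within a block the two values are equal, so this is trivial). The only bookkeeping point is that some $\lambda_i$ with $i\ge 2$ may vanish, i.e.\ some blocks are empty; then the realized values form a subchain of $A_1>A_2>\cdots>A_n>1$ and divisibility survives by transitivity.

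For the converse, the key observation is that $\delta_{i+1}\mid\delta_i$ together with $\delta_i\ge 1$ forces $\delta_{i+1}\le\delta_i$, so $(\delta_i)$ is non-increasing; and since $\Gamma$ has finite genus, $\delta_i=1$ for all large $i$. Thus the sequence takes finitely many distinct values $c_1>c_2>\cdots>c_r=1$, each dividing its predecessor, with $c_t$ occurring (consecutively, by monotonicity) $k_t\ge 1$ times for $t<r$. I would then reconstruct an inductive presentation: set $n=r-1$, $a_t=c_t/c_{t+1}$ (an integer $\ge 2$), $\lambda_t=k_t$, and recover $b_t$ from $b_1=\lambda_1$, $b_{t+1}=\lambda_{t+1}+a_tb_t$; one checks $b_{t+1}\ge a_tb_t$, and the telescoping product gives $A_t=\prod_{j=t}^{n}a_j=c_t/c_r=c_t$. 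Finally I would apply the forward computation to this $\Gamma(a,b)$: by the previous paragraph its difference sequence is exactly $k_1$ copies of $c_1,\dots,k_n$ copies of $c_n$, then all $1$'s, which is identical to that of $\Gamma$. Since $\Gamma$ and $\Gamma(a,b)$ are both subsets of $\N$ with least element $0$ and the same sequence of consecutive differences, they coincide, so $\Gamma$ is inductive.

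The degenerate case $\Gamma=\N$ (all $\delta_i=1$, so $r=1$) must be recorded separately and treated as inductive, consistently with the convention used in the testing procedure above. I expect the main obstacle to be the careful handling of the transition from the last finite block to the conductor tail: one must verify that the true conductor equals $b_nA_n$, that the last step reaching it equals $A_n$, and that every step strictly beyond it equals $1=c_r$, so that the block $\Lambda^{n+1}$ starting at $b_nA_n+1$ accounts for precisely the tail of $1$'s and there is no off-by-one shift when matching $(\delta_i)$ against the partition. The remaining work is routine telescoping and index bookkeeping.
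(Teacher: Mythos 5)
Your proof is correct and follows essentially the same route as the paper: the forward implication reads the difference sequence off the block decomposition of Lemma~\ref{partition}, and the converse reconstructs $a_k=A_k/A_{k+1}$ and $b_k=\lambda_k+a_{k-1}b_{k-1}$ from the runs of equal consecutive differences. Your extra care (monotonicity from divisibility, possibly empty blocks, the final identification of $\Gamma$ with $\Gamma(a,b)$ via matching difference sequences, and the degenerate case $\Gamma=\N$) only fills in details the paper leaves implicit.
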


\begin{proof}

If $\Gamma$ is inductive, we apply Lemma \ref{partition}, and set $A_{n+1}:=1$. Thus, the distance between any two consecutive elements inside a set $\Lambda^k$ is $A_k$, the distance between the last element of $\Lambda^k$ and the first one in $\Lambda^{k+1}$ ($1\leq k\leq n$) is $A_{k+1}$, and $A_{k+1}\mid A_k$ for all $1\leq k\leq n$, and the condition of the statement is satisfied. 

Conversely, assume that $\Gamma$ satisfies the condition $\delta_{i+1}\mid \delta_i$ for all $i\geq 1$. Then, suppose that the first $\lambda_1$ distances $\delta_i$ are equal to some number $A_1$, the next $\lambda_2$ distances are equal to $A_2$, and so on, until $\lambda_n$ distances are equal to $A_n$, and from there on all the distances are $A_{n+1}:=1$. Now, since by assumption $A_{k+1} \mid A_k$ ($1\leq k\leq n$), define $a_k:=A_k / A_{k+1}$, $b_1:=\lambda_1$, and $b_k := \lambda_k + a_{k-1}b_{k-1}$ for $2\leq k\leq n$, and it follows that $\Gamma$ is inductive. 
\end{proof}

Notice that we only need to check up to the conductor, because from there on the deltas are equal to one, and the procedure terminates after a finite number of tests. 

%Observe that every inductive semigroup must fulfill the following condition: $\delta_{i+1}\mid \delta_i$ for all $i\in \mathbb N$, with $\delta_i=\rho_{i+1}-\rho_i$, the distances between two consecutive elements in the semigroup (actually we only need to check up to the conductor, because from there on the deltas are equal to one).

\section[The second Feng-Rao number]{The second Feng-Rao number of an inductive semigroup}\label{sec:main}

Our purpose now is to compute the second Feng-Rao number of inductive numerical semigroups. To that end, we need some technical results. 

\begin{lemma}\label{linearity}
Let $j\in \{1,\ldots, a_n-1\}$ and $k\in \{0,\ldots, A_1/a_n-1\}$. 
Then $\sharp S_{ka_n+j}=\sharp S_{ka_n+1}+(j-1)$.
\end{lemma}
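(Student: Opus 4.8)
The plan is to reduce the statement to a gap-counting problem and then read off the $j$-dependence from the recursive description of $\Gamma=\Gamma_n$. First I would record the elementary identity that for any $x\ge 1$,
\[
\sharp\Ap(\Gamma,x)=\sharp\big(\Gamma\cap[0,x)\big)+\sharp\{h\in\N\setminus\Gamma : x+h\in\Gamma\},
\]
obtained by splitting the defining condition $y-x\notin\Gamma$ into the ranges $y<x$ (where $y-x<0$, so every such $y\in\Gamma$ is counted) and $y\ge x$ (where $y-x$ must be a gap). Since every $x=ka_n+j$ in the statement satisfies $1\le x\le A_1-1$ and $A_1$ is the multiplicity of $\Gamma$, the intersection $\Gamma\cap[0,x)$ is just $\{0\}$ and the first term equals $1$. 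It therefore suffices to understand $G(x):=\sharp\{h\in\N\setminus\Gamma : x+h\in\Gamma\}$ and to prove $G(ka_n+j)=G(ka_n+1)+(j-1)$.

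Next I would describe the gaps of $\Gamma_n$ using $\Gamma_n=a_n\Gamma_{n-1}\cup(a_nb_n+\N)$ together with the fact, read off from Lemma~\ref{partition}, that the conductor of $\Gamma_n$ is $a_nb_n$. Below the conductor an integer lies in $\Gamma_n$ exactly when it is a multiple $a_n s$ with $s\in\Gamma_{n-1}$; consequently the gaps split into those $h$ with $a_n\nmid h$ and $h<a_nb_n$, and those of the form $h=a_nh'$ with $h'$ a gap of $\Gamma_{n-1}$. Writing $h=\ell a_n+i$ with $1\le i\le a_n-1$ in the first family, I would then evaluate $G(ka_n+j)$ by testing, for each gap $h$, whether $x+h=(ka_n+j)+h\in\Gamma_n$; this amounts to deciding membership of $(k+\ell)a_n+(j+i)$, or of its carried form when $j+i\ge a_n$, according to its residue modulo $a_n$ and to whether it reaches the conductor.

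The heart of the argument is the bookkeeping of this case analysis, organized by the residue $j+i$ modulo $a_n$. For the second family of gaps, and for the single case $j+i=a_n$ (where $x+h$ becomes a multiple of $a_n$ and membership is governed by $\Gamma_{n-1}$), the resulting counts depend only on $k$ and not on $j$, so they contribute equally to $G(ka_n+j)$ and to $G(ka_n+1)$. For the remaining cases $j+i<a_n$ and $j+i>a_n$, a short count of the admissible $\ell$ (using the hypothesis $k\le A_1/a_n-1$ to guarantee $k\le b_n-1$) gives $(a_n-1-j)k$ and $(j-1)(k+1)$ respectively, whose sum simplifies to $k(a_n-2)+(j-1)$. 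Collecting everything yields $G(ka_n+j)=k(a_n-2)+(j-1)+\big(\text{terms independent of }j\big)$, and subtracting the $j=1$ value leaves exactly $j-1$.

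I expect the main obstacle to be the careful handling of the boundary condition $x+h\ge a_nb_n$, that is, whether one has passed the conductor, since for a non-multiple of $a_n$ this is precisely what decides membership in $\Gamma_n$, together with the lone carry case $j+i=a_n$ where control passes to $\Gamma_{n-1}$. The conceptual point that makes the whole computation come out cleanly is that all of the $j$-dependence concentrates in a single linear term, every other contribution being a function of $k$ alone.
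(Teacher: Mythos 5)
Your argument is correct, but it follows a genuinely different route from the paper's. The paper proves Lemma~\ref{linearity} by a set-theoretic comparison: it asserts the inclusion $S_{ka_n+1}\subseteq S_{ka_n+j}$ and identifies $S_{ka_n+j}\setminus S_{ka_n+1}$ as exactly the $j-1$ elements $c+ka_n+1,\ldots,c+ka_n+j-1$ beyond the conductor. You instead compute both cardinalities outright, via the identity $\sharp\Ap(\Gamma,x)=\sharp\bigl(\Gamma\cap[0,x)\bigr)+\sharp\{h\in\N\setminus\Gamma : x+h\in\Gamma\}$ and a case analysis of the gaps by residue modulo $a_n$; all the counts you quote check out (the contributions $(a_n-1-j)k$ and $(j-1)(k+1)$, the $j$-independence of the carry case $j+i=a_n$ and of the family $a_n\cdot(\N\setminus\Gamma_{n-1})$, and the needed bound $k\le b_n-1$, which indeed follows from $k\le A_1/a_n-1$ together with $b_n\ge a_{n-1}b_{n-1}\ge\cdots\ge a_1\cdots a_{n-1}b_1$). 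What your longer computation buys is robustness: the containment the paper leans on is in fact false in general. For $a=(2,3)$, $b=(1,3)$ one gets $\Gamma_2=\{0,6\}\cup(9+\N)$, and with $k=1$, $j=2$ one has $11\in S_4\setminus S_5$ (since $11-4=7\notin\Gamma_2$ while $11-5=6\in\Gamma_2$); here $S_5\setminus S_4=\{10,13\}$ and $S_4\setminus S_5=\{11\}$, so the cardinality identity survives even though the inclusion does not. Your method never invokes that inclusion and so yields a sound proof where the paper's is a sketch with a genuine gap; the only remaining work is to write out in full the ``short count of the admissible $\ell$'' that you describe, exactly along the lines you indicate.
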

\begin{proof}
Since $S_{ka_n+1}=\{x\in\Gamma \;|\; x-(ka_n+1) \notin \Gamma\}$ and $j<a_{n}$, that is the length of the largest interval of gaps of $\Gamma$, one has $S_{ka_n+1}\subseteq S_{ka_n+j}$. The remaining elements in $S_{ka_n+j}\setminus S_{ka_n+1}$ are  $c+ka_n+1,\ldots,c+ka_n+j-1$, with $c=a_n b_n$ the conductor of $\Gamma$. With this, the formula is proved. 
\end{proof}

\begin{lemma}\label{monotony}
Under the assumptions of the above lemma, if moreover $k>0$ one has $\sharp S_{ka_n}\le \sharp S_{ka_n+1}$.
\end{lemma}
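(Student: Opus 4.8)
The plan is to reduce the inequality $\sharp S_{ka_n}\le \sharp S_{ka_n+1}$ to a comparison of two explicit counts taking place in the smaller semigroup $\Gamma_{n-1}$, using the self-similar structure $\Gamma_n=a_n\Gamma_{n-1}\cup(a_nb_n+\mathbb N)$ together with the fact that $w:=ka_n$ is a multiple of $a_n$.

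First I would record that both $ka_n$ and $ka_n+1$ are gaps: since $k\le A_1/a_n-1$ and $a_n\ge 2$, both lie in $\{1,\dots,A_1-1\}$, which consists entirely of gaps because $A_1$ is the multiplicity; in particular $ka_n\notin\Gamma$. Writing $G=\mathbb N\setminus\Gamma$ and recalling $S_x=\{y\in\Gamma\mid y-x\notin\Gamma\}$, splitting $S_x$ according to whether $y<x$ or $y\ge x$ gives
\[
\sharp S_x=\sharp\{y\in\Gamma\mid y<x\}+\sharp\{z\in G\mid x+z\in\Gamma\},
\]
the second summand being finite because $x+z\ge c$ forces $x+z\in\Gamma$. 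As $ka_n\notin\Gamma$, the first summand is the same for $x=ka_n$ and $x=ka_n+1$, so the statement reduces to $\sharp T(ka_n+1)\ge\sharp T(ka_n)$, where $T(x)=\{z\in G\mid x+z\in\Gamma\}$. Partitioning $G$ by whether $x+z\in\Gamma$ or $x+z\in G$ shows $\sharp T(x)=g-\sharp U(x)$ with $U(x)=\{z\in G\mid x+z\in G\}$, so it is equivalent to prove $\sharp U(ka_n)\ge\sharp U(ka_n+1)$.

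Next I would compare $U(w)$ and $U(w+1)$ via the translation $z\mapsto z+1$: it injects $\{z\in U(w+1)\mid z+1\in G\}$ into $U(w)$, and an element $v\in U(w)$ lies in the image exactly when $v-1\in G$. Keeping track of the two boundary terms yields
\[
\sharp U(w)-\sharp U(w+1)=F-E,
\]
where $E=\sharp\{z\in G\mid z+1\in\Gamma,\ z+w+1\in G\}$ and $F=\sharp\{v\in G\mid v-1\in\Gamma,\ v+w\in G\}$, so that it remains to prove $F\ge E$. Here the inductive structure enters: below the conductor $c=a_nb_n$ a point is a gap iff it is a non-multiple of $a_n$, or a multiple $a_nz'$ with $z'$ a gap of $\Gamma_{n-1}$, and $\Gamma_n\cap[0,c)=a_n\bigl(\Gamma_{n-1}\cap[0,b_n)\bigr)$. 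Since $w=ka_n$, for $z\in E$ the condition $z+1\in\Gamma$ forces $z\equiv a_n-1\pmod{a_n}$ and $z+1=a_np$ with $p\in\Gamma_{n-1}$, while $z+w+1=a_n(p+k)$; thus $E$ is in bijection with $\{p\in\Gamma_{n-1}\mid p\ge 1,\ p+k\in G_{n-1}\}$, where $G_{n-1}$ denotes the gaps of $\Gamma_{n-1}$. Symmetrically, for $v\in F$ the condition $v-1\in\Gamma$ forces $v\equiv 1\pmod{a_n}$ and $v-1=a_np'$ with $p'\in\Gamma_{n-1}$, and $v+w=a_n(p'+k)+1$ is automatically a gap as soon as $p'+k<b_n$; hence $F$ is in bijection with $\{p'\in\Gamma_{n-1}\mid p'\ge 0,\ p'+k<b_n\}$.

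To conclude I would observe that any $p$ counted by $E$ satisfies $p+k\in G_{n-1}$, whence $p+k<c_{n-1}=a_{n-1}b_{n-1}\le b_n$ and $p\ge 1>0$; therefore the $E$-set embeds into the $F$-set by $p\mapsto p$, giving $\sharp E\le\sharp F$ and finishing the proof. I expect the main obstacle to be precisely the middle step: the translation argument produces two boundary corrections $E$ and $F$ that cannot be compared by a naive inclusion $S_{ka_n}\subseteq S_{ka_n+1}$ (which is false), and one genuinely needs the mod-$a_n$ classification of gaps and the divisibility $a_n\mid w$ to evaluate $E$ and $F$ in $\Gamma_{n-1}$, after which the inequality becomes the transparent set inclusion above.
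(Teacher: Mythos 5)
Your proof is correct, but it takes a genuinely different and substantially longer route than the paper's. The paper's argument is a short injection $S_{ka_n}\hookrightarrow S_{ka_n+1}$ built directly from the partition of Lemma \ref{partition}: every element of $\Lambda^1\cup\cdots\cup\Lambda^n$ is a multiple of $a_n$ not exceeding the conductor, hence \emph{all} of these elements lie in $S_{ka_n+1}$ while $S_{ka_n}$ meets this set only in a subset, and on the tail $\Lambda^{n+1}$ the shift $y\mapsto y+1$ injects $S_{ka_n}\cap\Lambda^{n+1}$ into $S_{ka_n+1}\cap\Lambda^{n+1}$; adding the two contributions finishes the proof, and only the fact that the small elements are multiples of $a_n$ is used. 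You instead pass to the complementary gap count $U(x)=\{z\in G\mid x+z\in G\}$, compare $U(w)$ and $U(w+1)$ by a translation with two boundary terms $E$ and $F$, and then invoke the full recursive structure $\Gamma_n=a_n\Gamma_{n-1}\cup(a_nb_n+\mathbb N)$ to identify $E$ and $F$ with explicit subsets of $\Gamma_{n-1}$, one contained in the other. I checked the steps, including the bijections (with the boundary cases $z+1=a_nb_n$ and $p'=0$) and the inequality $p+k<c_{n-1}\le b_n$ that makes the inclusion of index sets work, and everything is sound; your argument even produces the exact value of the difference, $\sharp S_{ka_n+1}-\sharp S_{ka_n}=\sharp F-\sharp E$, which is finer information than the lemma requires. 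What you pay for this is length and an essential reliance on descending to $\Gamma_{n-1}$, which the paper's two-line argument avoids entirely; your closing observation that the naive inclusion $S_{ka_n}\subseteq S_{ka_n+1}$ fails is correct, and the paper sidesteps it precisely by using the identity map below the conductor and the shift by $1$ above it.
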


\begin{proof}
We make use of the disjoint decomposition $\Lambda^m$ of $\Gamma$. In fact, since all the elements of $\Lambda^{1} \cup \cdots \cup \Lambda^{n}$ are multiples of $a_{n}$, then 
\[
S_{ka_n+1} \cap (\Lambda^{1} \cup \cdots \cup \Lambda^{n}) = \Lambda^{1} \cup \cdots \cup \Lambda^{n}.
\]
Trivially, $S_{ka_n} \cap (\Lambda^{1} \cup \cdots \cup \Lambda^{n}) \subseteq \Lambda^{1} \cup \cdots \cup \Lambda^{n}$. 

Observe that 
\[
1 + (S_{ka_n}\cap\Lambda^{n+1}) \subseteq S_{ka_n+1}\cap\Lambda^{n+1}.
\]
and the inclusion is strict if and only if $a_n b_n -ka_n \notin \Gamma_n$. 
% $a_n b_n + 1 is in the second set but not in the first one 
Combining both facts one obtains the desired inequality. 
\end{proof}

As a consequence of the previous results, the function $\sharp S_{x}$, for $x\in I_n$, is increasing with respect to $x$ except for (possibly) the elements of the form $x=ka_n$, where this function can probably drop. Thus, only these elements, together with $x=1$, must be taken into account in order to compute the minimum 
\[
\mathrm E(\Gamma,2)=\min\big\{\sharp S_{x} \mid x\in\{1,\ldots,A_{1}\}\big\}.
\]

\begin{lemma}\label{induction}
Let $\Gamma$ be a numerical semigroup with conductor $c$, and let 
\[
\Gamma' := a\cdot\Gamma \cup (ab+\mathbb N),
\]
with $b$ an integer, $b\geq c$. Then for $t\geq 1$ one has 
\[\Ap(\Gamma',at)=a\Ap(\Gamma,t)\cup (\{a b, a b+1,\ldots,a b+at-1\}\setminus \{ ab,a b+a,a b+2a,\ldots, a b+(t-1)a\}),\]
and this union is disjoint.
In particular, 
\[
\sharp \Ap(\Gamma',at) = \sharp \Ap(\Gamma,t) + (a-1)t.
\]
\end{lemma}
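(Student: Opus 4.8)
The plan is to compute $\Ap(\Gamma',at)=\{x\in\Gamma'\mid x-at\notin\Gamma'\}$ directly, sorting the elements of $\Gamma'$ by whether or not they are multiples of $a$. The key preliminary remark is that the multiples of $a$ lying in $\Gamma'$ are exactly the elements of $a\Gamma$: any multiple $ak$ with $ak\geq ab$ has $k\geq b\geq c$, hence $k\in\Gamma$ and $ak\in a\Gamma$, so the tail $ab+\mathbb N$ contributes no new multiples of $a$. Consequently an element $x\in\Gamma'$ fails to be a multiple of $a$ precisely when $x\in ab+\mathbb N$ and $a$ does not divide $x$.

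First I would treat the multiples of $a$. Writing $x=a\gamma$ with $\gamma\in\Gamma$, the difference is $x-at=a(\gamma-t)$, again a multiple of $a$, and I claim that $a(\gamma-t)\in\Gamma'$ if and only if $\gamma-t\in\Gamma$. The ``if'' direction is immediate. For the converse, assume $\gamma-t\notin\Gamma$: if $\gamma-t<0$ then $a(\gamma-t)<0$ is not in $\Gamma'$, whereas if $\gamma-t\geq 0$ is a gap of $\Gamma$ then $\gamma-t<c\leq b$, so $a(\gamma-t)<ab$ and $a(\gamma-t)\notin ab+\mathbb N$; being a multiple of $a$ outside $a\Gamma$, it lies outside $\Gamma'$. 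This is the only place where the hypothesis $b\geq c$ is essential, and it is the step to get right. Thus $a\gamma\in\Ap(\Gamma',at)$ exactly when $\gamma-t\notin\Gamma$, that is when $\gamma\in\Ap(\Gamma,t)$, producing the summand $a\Ap(\Gamma,t)$.

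Next I would treat the non-multiples of $a$, which by the opening remark are precisely the $x\geq ab$ that are not multiples of $a$. For such $x$ the difference $x-at$ is still not a multiple of $a$, so $x-at\in\Gamma'$ if and only if $x-at\geq ab$. Hence $x\in\Ap(\Gamma',at)$ forces $ab\leq x<ab+at$ with $a$ not dividing $x$, and conversely every such $x$ belongs to the Ap\'ery set. This describes exactly the interval $\{ab,ab+1,\ldots,ab+at-1\}$ with its $t$ multiples of $a$, namely $ab,ab+a,\ldots,ab+(t-1)a$, removed, matching the stated set.

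Finally, the two pieces are disjoint since one consists of multiples of $a$ and the other of non-multiples, so the asserted decomposition holds. For the count, multiplication by $a$ is injective, giving $\sharp\, a\Ap(\Gamma,t)=\sharp\Ap(\Gamma,t)$, while the second set contains $at$ consecutive integers with $t$ of them deleted, contributing $(a-1)t$; summing yields $\sharp\Ap(\Gamma',at)=\sharp\Ap(\Gamma,t)+(a-1)t$. I expect the only genuine subtlety to be the equivalence in the multiples-of-$a$ case, which hinges essentially on $b\geq c$; everything else is bookkeeping with the two residue classes and the interval count.
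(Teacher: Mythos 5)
Your proof is correct and follows essentially the same route as the paper's: both split $\Ap(\Gamma',at)$ by whether the element is a multiple of $a$, identify the non-multiples with the stated interval minus its multiples of $a$, and reduce the multiples-of-$a$ case to $a\Ap(\Gamma,t)$ using $b\geq c$ exactly where you flag it. No gaps.
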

\begin{proof}
If we take an element $\lambda'\in\Gamma'$ and $\lambda'-at\notin\Gamma'$, in particular $\lambda'-at<a b$. Then $\Ap(\Gamma',at)$ can be decomposed into two disjoint subsets, depending on whether $\lambda'$ is multiple of $a$ or not, as follows 
\[
\Ap(\Gamma',at) = \{\lambda'\in\Gamma'\setminus a\cdot\mathbb{N}\mid \lambda'-at<a b\} \cup \{a\lambda \mid a\lambda\in\Gamma' \mbox{ and } a\lambda-at\notin\Gamma'\}.
\]
The elements in $\Gamma'\setminus a\cdot \mathbb N$ are all in $[a b,\infty)\setminus a\cdot \mathbb N$. Hence
\[
\{\lambda'\in\Gamma'\setminus a\cdot\mathbb{N}\;|\;\lambda'-at<a b\} = \{ab,a b+1,\ldots,a b+at-1\}\setminus\{a b, a b+a,a b+2a,\ldots, a b+(t-1)a\}. 
\]
It remains to prove that the set $\{a\lambda \mid a\lambda\in \Gamma'\hbox{ and }a\lambda-at\not\in \Gamma'\}$ equals $a\cdot \mathrm{Ap}(\Gamma,t)$. 
\begin{itemize}
\item[$\supseteq$] If $\lambda\in \Ap(\Gamma,t)$, then $\lambda-t\not\in \Gamma$. Hence $\lambda-t< c\le b$, and consequently $a\lambda -at<ab$. This forces $a\lambda -at\not\in a\Gamma$ and $a\lambda -at\not\in (ab+\mathbb N)$, whence $a\lambda-at\not\in\Gamma'$.

\item[$\subseteq$] Let $a\lambda\in \Gamma'$ be such that $a\lambda -at\not\in \Gamma'$. This in particular means that $a\lambda -at\not\in a\Gamma$. Also $a\lambda \in \Gamma'$ and $b\ge c$ imply that $\lambda \in \Gamma$. Hence $\lambda\in \Gamma$ and $\lambda -t\not\in \Gamma$, which yields $a\lambda \in a\Ap(\Gamma,t)$. \qedhere
\end{itemize}

\end{proof}

Note that the term $(a-1)t$ in the above formula is increasing with respect to $t$. This allows us to obtain the following result. 

\begin{proposition}\label{multiples}
Under the standing hypothesis, for a fixed $i\in\{2,\ldots, n\}$ we have 
\[
\min\{\sharp S_x \mid x\in\{A_i,A_{i}+1,\ldots,A_{i-1}-1\}\}=\sharp S_{A_i}.
\]
\end{proposition}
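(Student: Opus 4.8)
The plan is to argue by induction on the number $n$ of inductive steps, peeling off the last layer $\Gamma_n=a_n\Gamma_{n-1}\cup(a_nb_n+\mathbb{N})$ by means of Lemma~\ref{induction} and then invoking the proposition itself for $\Gamma_{n-1}$. Throughout I will rely on the remark following Lemma~\ref{monotony}, which tells us that on the fundamental interval $\sharp S_x$ is non-decreasing on each block $[ka_n,(k+1)a_n-1]$ and can only drop when passing to the next multiple of $a_n$. Since $[A_i,A_{i-1}-1]\subseteq I_n$, this remark is available on the whole interval under consideration.

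First I would reduce the problem to the multiples of $a_n$ lying in the interval. Because $\sharp S_{ka_n}\le\sharp S_{ka_n+1}\le\cdots\le\sharp S_{(k+1)a_n-1}$ by Lemmas~\ref{monotony} and~\ref{linearity}, the minimum of $\sharp S_x$ over $[A_i,A_{i-1}-1]$ is attained at one of the multiples $ka_n$ contained in it. Both endpoints $A_i=a_nA_i^{(n-1)}$ and $A_{i-1}=a_nA_{i-1}^{(n-1)}$ are themselves multiples of $a_n$, so these are exactly the $ka_n$ with $k\in\{A_i^{(n-1)},\ldots,A_{i-1}^{(n-1)}-1\}$; in particular $x=1$ never enters the interval, since $i\ge 2$ forces $A_i\ge a_2\cdots a_n\ge 2$.

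Next I would apply Lemma~\ref{induction} with $\Gamma=\Gamma_{n-1}$, $a=a_n$ and $b=b_n$. Its hypothesis $b_n\ge c_{n-1}$ holds because the conductor of $\Gamma_{n-1}$ is $c_{n-1}=a_{n-1}b_{n-1}$ and the defining inequality of inductive semigroups gives $b_n\ge a_{n-1}b_{n-1}$. Hence $\sharp S_{a_nk}=\sharp\Ap(\Gamma_{n-1},k)+(a_n-1)k$, so that the task becomes the minimization of
\[
f(k)=\sharp\Ap(\Gamma_{n-1},k)+(a_n-1)k,\qquad k\in\{A_i^{(n-1)},\ldots,A_{i-1}^{(n-1)}-1\}.
\]

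Finally I would close the induction. The case $i=n$ serves as the base and needs no induction hypothesis: the range is $\{1,\ldots,a_{n-1}-1\}$, which contains no multiple of $a_{n-1}$, so the remark applied to $\Gamma_{n-1}$ shows $\sharp\Ap(\Gamma_{n-1},k)$ is increasing there, whence $f$ is increasing and minimal at $k=1=A_n^{(n-1)}$. For $2\le i\le n-1$ the induction hypothesis, namely the proposition for $\Gamma_{n-1}$, yields $\sharp\Ap(\Gamma_{n-1},k)\ge\sharp\Ap(\Gamma_{n-1},A_i^{(n-1)})$ on the whole range. The main subtlety—and the only real obstacle—is that this hypothesis controls only the Apéry term, not the full $f$; I expect this to be resolved by noting that the extra summand $(a_n-1)k$ is also minimized at the same left endpoint $k=A_i^{(n-1)}$, so adding the two inequalities gives $f(k)\ge f(A_i^{(n-1)})$. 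In either case the minimum of $f$ equals $f(A_i^{(n-1)})=\sharp S_{A_i}$, which is the assertion.
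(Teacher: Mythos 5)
Your proof is correct and follows essentially the same route as the paper's: reduce to multiples of $a_n$ via Lemmas~\ref{linearity} and~\ref{monotony}, descend to $\Gamma_{n-1}$ with Lemma~\ref{induction}, and observe that both $\sharp\Ap(\Gamma_{n-1},k)$ and the linear term $(a_n-1)k$ are minimized at the left endpoint. Your explicit separate treatment of the case $i=n$ (where the induction hypothesis does not apply and one must fall back on Lemma~\ref{linearity} for $\Gamma_{n-1}$) is a point the paper handles only implicitly, so that is a welcome clarification rather than a deviation.
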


\begin{proof}

We proceed by induction on $n\geq 2$ (for $n=1$ there is nothing to prove). 
Note that, because of Lemmas \ref{linearity} and \ref{monotony}, it suffices to prove that $\sharp S_{x} \geq \sharp S_{A_{i}}$, 
for $x=A_{i} + ka_{n}$ and $0\leq k < (a_{i-1}-1)\cdot a_{i}\cdots a_{n-1}$, that is, $x<A_{i-1}$. 

Thus, for $n=2$, let $x=A_2+ka_{2}=a_2 + k a_2$, with $0\leq k<a_1 -1$. In other words, $x=j a_2$ for $j\in \{1,\ldots,a_1 -1\}$. 
In this case, we apply Lemma \ref{linearity} to $\Gamma_1$, $k=0$ and $j\in\{1,\ldots,a_1 -1\}$, so that $\sharp\Ap(\Gamma_1 ,j)$ is linearly increasing in $j$, for such values of $j$. Now we use Lemma \ref{induction} with $\Gamma=\Gamma_1$, $a=a_2$, and $b=b_2$, taking $t=j$, so that the inequality $\sharp S_{ta_2} \geq \sharp S_{a_2}$ holds, and the statement follows easily. 

Now assume that the statement holds for $n$ and let us prove it for $n+1$. By Lemma \ref{induction}, and using that $A_i^{(n+1)}/a_{n+1}=A_i^{(n)}$, we obtain
\[\sharp \Ap(\Gamma_{n+1}, A_i^{(n+1)}+ka_{n+1}) = \sharp \Ap(\Gamma_n, A_i^{(n)}+k)+(a_{n+1}-1)(A_i^{(n)}+k),\] for all suitable $k$. Now the induction hypothesis is telling us that $\sharp\Ap(\Gamma_n,A_i/a_{n+1})\le \sharp \Ap(\Gamma_n, A_i/a_{n+1}+k)$ for every nonnegative integer $k$ such that $A_i^{(n)}+k <A_{i-1}^{(n)}$, or equivalently, $A_i^{(n+1)}+ka_{n+1}<A_{i-1}^{(n+1)}$. Hence
\begin{multline*}
\sharp \Ap(\Gamma_{n+1}, A_i^{(n+1)}) =\sharp\Ap(\Gamma_n,A_i^{(n)})+(a_{n+1}-1)A_i^{(n)}\\ \le 
\sharp \Ap(\Gamma_n, A_i^{(n)}+k)+(a_{n+1}-1)(A_i^{(n)}+k) = \sharp \Ap(\Gamma_{n+1}, A_i^{(n+1)}+ka_{n+1}).
\end{multline*}
With the use of Lemmas \ref{linearity} and \ref{monotony} we conclude the proof.
\end{proof}

As a consequence of the above results, one obtains the following result. 

\begin{theorem}\label{E2}
For an inductive numerical semigroup $\Gamma=\Gamma(a,b)$, with $a\in \mathbb N^n$,
one has
\[
\mathrm E(\Gamma,2)=\min
\{
\sharp S_{1},\sharp S_{A_{n}},\sharp S_{A_{n-1}},
\ldots,\sharp S_{A_{2}},\sharp S_{A_1}
\}.
\]
\end{theorem}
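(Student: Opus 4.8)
The plan is to assemble the statement from the preceding partial results by partitioning the fundamental interval $I_n=\{1,\ldots,A_1\}$ and applying the correct earlier result on each piece. The starting point is the identity already recorded just before Proposition~\ref{multiples}: by \eqref{calc-e2} and $\rho_2=A_1$ one has $\mathrm E(\Gamma,2)=\min\{\sharp S_x\mid x\in\{1,\ldots,A_1\}\}$. Thus the entire task reduces to showing that, in computing this minimum, it suffices to test the values $x=1$ and $x=A_i$ for $i=1,\ldots,n$.

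To do this I would write the disjoint decomposition
\[
\{1,\ldots,A_1\}=\{1,\ldots,A_n-1\}\ \cup\ \bigcup_{i=2}^{n}\{A_i,\ldots,A_{i-1}-1\}\ \cup\ \{A_1\},
\]
which is legitimate since $A_n<A_{n-1}<\cdots<A_1$, and then minimize $\sharp S_x$ over each block separately. On the initial block $\{1,\ldots,A_n-1\}=\{1,\ldots,a_n-1\}$ (using $A_n=a_n$), Lemma~\ref{linearity} with $k=0$ gives $\sharp S_j=\sharp S_1+(j-1)$, so the block minimum is $\sharp S_1$, attained at $j=1$. On each intermediate block $\{A_i,\ldots,A_{i-1}-1\}$ with $i\in\{2,\ldots,n\}$, Proposition~\ref{multiples} says precisely that the block minimum is $\sharp S_{A_i}$. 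On the terminal singleton $\{A_1\}$ the value is of course $\sharp S_{A_1}$. Taking the minimum of these block minima yields $\mathrm E(\Gamma,2)=\min\{\sharp S_1,\sharp S_{A_n},\sharp S_{A_{n-1}},\ldots,\sharp S_{A_2},\sharp S_{A_1}\}$, which is exactly the claim.

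I do not expect a genuine obstacle at the level of the theorem itself, as all the substantive work has already been carried out: Lemmas~\ref{linearity} and~\ref{monotony} show that $\sharp S_x$ is increasing on $I_n$ except possibly at the multiples of $A_n=a_n$, so that only $x=1$ and these multiples are candidate minima, while Proposition~\ref{multiples} bounds each possible downward jump from below by $\sharp S_{A_i}$. The only care needed is bookkeeping: one must check that the half-open intervals of Proposition~\ref{multiples}, together with the initial block and the terminal point, tile $\{1,\ldots,A_1\}$ exactly once, and verify that the degenerate case $n=1$ of ordinary semigroups, where the middle union is empty and $A_n=A_1$, is subsumed by the same argument, reducing the list to $\{\sharp S_1,\sharp S_{A_1}\}$.
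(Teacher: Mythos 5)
Your proposal is correct and is essentially the argument the paper intends: the paper states Theorem~\ref{E2} as an immediate consequence of Lemmas~\ref{linearity} and~\ref{monotony} together with Proposition~\ref{multiples}, and your block decomposition of $\{1,\ldots,A_1\}$ simply makes that deduction explicit (including the correct handling of the initial block via Lemma~\ref{linearity} with $k=0$ and of the degenerate case $n=1$). No gaps.
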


We now compute explicitly the numbers involved in the above theorem. 

\begin{theorem}\label{SAi}
For an inductive numerical semigroup $\Gamma_{n}$ with $n\geq 1$ one has
\[
\sharp S_{1} = \lambda_{1}+\cdots+\lambda_{n} + 1,
\]
where $\lambda_{1}=b_{1}$ and 
\[
\sharp S_{A_{n-k}} = \lambda_{1}+\cdots+\lambda_{n-k-1} + A_{n-k}
\] 
for $k\in \{0,\ldots, n-1\}$. 
\end{theorem}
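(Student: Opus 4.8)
The plan is to compute both cardinalities directly from the structure given by Lemma~\ref{induction}, applied recursively. The key observation is that Lemma~\ref{induction} reduces an Ap\'ery set computation in $\Gamma_k$ to one in $\Gamma_{k-1}$ at a smaller index, adding a controlled term. Specifically, each step $\Gamma_k = a_k\Gamma_{k-1}\cup(a_kb_k+\N)$ multiplies the relevant Ap\'ery index by $a_k$ and augments the cardinality by $(a_k-1)t$, where $t$ is the index in $\Gamma_{k-1}$. I would first establish the formula for $\sharp S_{A_{n-k}}$ and then treat $\sharp S_1$ either as a limiting/degenerate case or by a separate short argument.

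\textbf{Computing $\sharp S_{A_{n-k}}$.} Fix $k\in\{0,\ldots,n-1\}$ and note that $A_{n-k}=\prod_{j=n-k}^n a_j$. The element $A_{n-k}$ is a multiple of $a_n$, then of $a_{n-1}a_n$, and so on, so I can peel off the top generators one at a time. Applying Lemma~\ref{induction} with $a=a_n$, $b=b_n$, $t=A_{n-k}/a_n=A_{n-k}^{(n-1)}$, I get
\[
\sharp\Ap(\Gamma_n,A_{n-k})=\sharp\Ap(\Gamma_{n-1},A_{n-k}^{(n-1)})+(a_n-1)A_{n-k}^{(n-1)}.
\]
Iterating this descent from $\Gamma_n$ down to $\Gamma_{n-k}$, at each level $j$ the index stays equal to $A_{n-k}^{(j)}=\prod_{i=n-k}^{j}a_i$ and the increment is $(a_j-1)A_{n-k}^{(j-1)}$. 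These increments telescope: summing $(a_j-1)A_{n-k}^{(j-1)}=A_{n-k}^{(j)}-A_{n-k}^{(j-1)}$ collapses to $A_{n-k}^{(n)}-A_{n-k}^{(n-k)}=A_{n-k}-a_{n-k}$. When the descent reaches $\Gamma_{n-k}$, the index has become $A_{n-k}^{(n-k)}=a_{n-k}$, and since $a_{n-k}$ is the multiplicity of $\Gamma_{n-k}$, one more application of Lemma~\ref{induction} (or a direct computation of $\Ap(\Gamma_{n-k},a_{n-k})$ via Lemma~\ref{partition}) should yield the base contribution. The remaining task is to identify this base cardinality as $\lambda_1+\cdots+\lambda_{n-k-1}+a_{n-k}$, after which adding the telescoped increment $A_{n-k}-a_{n-k}$ gives exactly $\lambda_1+\cdots+\lambda_{n-k-1}+A_{n-k}$, as claimed.

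\textbf{Computing $\sharp S_1$ and the base case.} For $\sharp S_1$ I would argue that $S_1=\Ap(\Gamma,1)=\{x\in\Gamma\mid x-1\notin\Gamma\}$ consists exactly of $0$ together with the left endpoint of each maximal run of semigroup elements; by the explicit partition in Lemma~\ref{partition}, these are $0$ and the first element of each block $\Lambda^1,\ldots,\Lambda^{n+1}$ that does not immediately follow another semigroup element. Counting the small elements that start a new ``staircase step'' should give $\lambda_1+\cdots+\lambda_n+1$, matching the stated formula; this same counting, applied inside $\Gamma_{n-k}$, supplies the base cardinality $\lambda_1+\cdots+\lambda_{n-k-1}+a_{n-k}$ needed above, once one checks that $\sharp\Ap(\Gamma_{n-k},a_{n-k})$ equals the number of small elements of $\Gamma_{n-k}$, namely $\lambda_1+\cdots+\lambda_{n-k}+1$, and reconciles this with the multiplicity contribution.

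\textbf{Main obstacle.} The delicate point is bookkeeping the indices across the recursion: the quantity $A_i$ silently changes meaning as one passes from $\Gamma_n$ to $\Gamma_{n-1}$ (the superindex convention $A_i^{(k)}$), and one must verify that dividing by $a_n,a_{n-1},\ldots$ lands precisely on the multiplicity of the terminal semigroup $\Gamma_{n-k}$ rather than overshooting or stopping early. I expect the cleanest route is an explicit induction on $k$ (or on $n$, mirroring Proposition~\ref{multiples}), carrying the inductive hypothesis $\sharp S_{A_{n-k}}^{(\Gamma_{n-1})}=\lambda_1+\cdots+\lambda_{n-k-1}+A_{n-k}^{(n-1)}$ and verifying that the single application of Lemma~\ref{induction} reproduces the formula at the next level; the telescoping of the $(a_j-1)$-increments is the mechanism that makes the sum of $\lambda$'s stay fixed while the trailing term grows from $a_{n-k}$ up to the full product $A_{n-k}$.
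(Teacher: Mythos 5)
Your overall strategy is the same as the paper's: both proofs rest on Lemma~\ref{induction} applied along the inductive construction, with the $\sharp S_1$ formula (read off from Lemma~\ref{partition}) feeding the recursion at the bottom. The paper organizes this as a single induction on $n$ (one application of Lemma~\ref{induction} per step, with $t=1$ for $\sharp S_{A_{n+1}}$ and $t=A_{n-(k-1)}^{(n)}$ otherwise), whereas you unroll the recursion into a telescoping descent; the telescoping identity $(a_j-1)A_{n-k}^{(j-1)}=A_{n-k}^{(j)}-A_{n-k}^{(j-1)}$, summing to $A_{n-k}-a_{n-k}$, is correct and is a clean way to see why the $\lambda$'s freeze while the trailing term grows.

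However, your handling of the base case contains two concrete errors. First, $a_{n-k}$ is \emph{not} the multiplicity of $\Gamma_{n-k}$: the multiplicity is $A_1^{(n-k)}=a_1a_2\cdots a_{n-k}$, which exceeds $a_{n-k}$ unless $n-k=1$. In particular $a_{n-k}\notin\Gamma_{n-k}$ in general, so neither the fact $\sharp\Ap(\Gamma,x)=x$ for $x\in\Gamma$ nor a quick read of Lemma~\ref{partition} identifies $\sharp\Ap(\Gamma_{n-k},a_{n-k})$. Second, $\sharp\Ap(\Gamma_{n-k},a_{n-k})$ is not ``the number of small elements of $\Gamma_{n-k}$, namely $\lambda_1+\cdots+\lambda_{n-k}+1$''; that quantity is $\sharp\Ap(\Gamma_{n-k},1)$. (For example, with $a=(2,2)$, $b=(1,2)$ one gets $\Gamma_2=\{0\}\cup(4+\N)$, where $\sharp\Ap(\Gamma_2,a_2)=\sharp\{0,4,5\}=3$ while the number of small elements is $2$.) The correct closure of your argument is the option you mention only in passing: one further application of Lemma~\ref{induction} with $t=1$, giving $\sharp\Ap(\Gamma_{n-k},a_{n-k})=\sharp\Ap(\Gamma_{n-k-1},1)+(a_{n-k}-1)=\big(\lambda_1+\cdots+\lambda_{n-k-1}+1\big)+(a_{n-k}-1)$, which is exactly the needed base value $\lambda_1+\cdots+\lambda_{n-k-1}+a_{n-k}$. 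With that substitution (and deleting the two false claims), your telescoping version goes through and coincides in substance with the paper's induction.
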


\begin{proof}

The formula for $\sharp S_{1}$ is obvious for all $n\geq 1$, since by construction it provides the number of deserts. 

Take then $k\in \{0,\ldots, n-1\}$. We proceed by induction in $n$. 

\noindent$\boldsymbol{n=1}$. We only have to compute $\sharp S_{A_1}$, where $A_{1}=a_{1}$ is the multiplicity of $\Gamma_1$. In fact 
\[
\Ap(\Gamma_1,a_1)=\{0,a_1 b_1 +1,\ldots,a_1 b_1 +(a_1 -1)\}
\]
and hence $\sharp S_{A_1}=A_{1}$. 

\noindent$\boldsymbol{n}$ implies $\boldsymbol{n+1}$. Assume that for $\Gamma=\Gamma_n$ the statement is true. Take $\Gamma'=\Gamma_{n+1}$ and apply again Lemma \ref{induction} with $a=a_{n+1}$ and $b=b_{n+1}$ as follows. 

\noindent $\sharp S_{A_{n+1}}$: taking $t=1$, we get 
\[
\sharp \Ap(\Gamma_{n+1},a_{n+1}) = \sharp \Ap(\Gamma_n,1) + (a_{n+1} -1) = \lambda_1+\cdots+\lambda_n + 1 + a_{n+1} -1 = \lambda_1+\cdots+\lambda_n + A_{n+1}.
\]

\noindent $\sharp S_{A_{n+1-k}}$: ($k\in\{1,\ldots,n\}$) set $t=A_{n-(k-1)}^{(n)}=a_{n+1-k}\cdots a_{n}$; we obtain 
\begin{multline*}
\sharp \Ap(\Gamma_{n+1},A_{n+1-k}) = \sharp \Ap(\Gamma_n,A_{n-(k-1)}^{(n)}) + (a_{n+1} -1)A_{n-(k-1)}^{(n)} \\
=\lambda_1+\cdots+\lambda_{n-k} + A_{n-(k-1)}^{(n)} + (a_{n+1} -1)A_{n-(k-1)}^{(n)} \\ 
=\lambda_1+\cdots+\lambda_{n-k} + A_{n-(k-1)}^{(n)} a_{n+1} = \lambda_1+\cdots+\lambda_{n-k} + A_{n+1-k}.\qedhere
\end{multline*}

\end{proof}

Notice that the equality $\sharp S_{A_{1}}=A_1$ follows also as a consequence of $A_1 \in \Gamma_n$. 
It is clear from the above result that, depending on the choice of the numbers $a_{i}$ and $\lambda_{j}$, the minimum can be achieved in any of the involved $\sharp S_{x}$, so that we cannot skip any element in this formula. We illustrate this with an example.

\begin{example}\label{sharpness}

Consider $n=4$. In order to compute the second Feng-Rao number of $\Gamma_4$ we have to take the minimum from the following integers:

\begin{enumerate}[(1)]

\item $\lambda_1 + \lambda_2 + \lambda_3 + \lambda_4 +1$, 

\item $\lambda_1 + \lambda_2 + \lambda_3 + a_4$, 

\item $\lambda_1 + \lambda_2 + a_3 a_4$, 

\item $\lambda_1 + a_2 a_3 a_4$, 

\item $a_1 a_2 a_3 a_4$. 

\end{enumerate}
Each of these values can be achieved as the minimum, for instance: 
\begin{enumerate}[$\bullet$]

\item if $a_1=a_2=a_3=a_4=4$ and $\lambda_1=\lambda_2=\lambda_3=\lambda_4=2$, then the minimum is the value (1), 

\item if $a_1=a_2=a_3=3$, $a_4=2$ and $\lambda_1=\lambda_2=\lambda_3=\lambda_4=2$, then the minimum is the value (2), 

\item if $a_1=a_2=4$, $a_3=a_4=2$, and $\lambda_1=\lambda_2=\lambda_3=\lambda_4=4$, then the minimum is the value (3), 

\item if $a_1=a_2=a_3=a_4=2$ and $\lambda_1=\lambda_2=\lambda_3=\lambda_4=6$, then the minimum is the value (4), 

\item if $a_1=a_2=a_3=a_4=2$ and $\lambda_1=\lambda_2=\lambda_3=\lambda_4=9$, then the minimum is the value (5). 

\end{enumerate}

\end{example}

As a consequence of Theorems \ref{E2} and \ref{SAi}, one can easily compute the second Feng-Rao number $\mathrm E = \mathrm E(\Gamma,2)$ for an inductive numerical semigroup $\Gamma$ by using the following algorithm. 

\begin{algorithm}\label{computing}
{\rm Input:} ${\bf a},{\bf b}$ of length $n$ 

Compute $\lambda_1,\ldots,\lambda_n$ 

Compute $L:=\lambda_1+\cdots+\lambda_n$ 

Set $A:=1$ 

Set $\mathrm E:=L+A$ 

for i in $\{1,\ldots,n\}$ do 

\hspace{1.cm} $L:=L-\lambda_{n+1-i}$ 

\hspace{1.cm} $A:=A\, a_{n+1-i}$ 

\hspace{1.cm} if $L+A<\mathrm E$ then $\mathrm E:=L+A$ end 

end

{\rm Output:} $\mathrm E$ 

\end{algorithm}

\subsection{More on Ap\'{e}ry sets}

From the discussions above,  we are able to compute explicitly the Ap\'{e}ry sets of an inductive numerical semigroup with respect an element $x$ in the fundamental interval, by means of a recursive procedure.

Thus, for $n=1$ consider the semigroup $\Gamma_1 = a_1\mathbb{N} \cup (a_1 b_1 + \mathbb{N})$ and $x\in\{1,\ldots,a_1\}$. According to Lemma \ref{partition}, this semigroup is partitioned into the following two sets: 
\begin{itemize}
	\item $\Lambda^1=\{0,a_1,2a_1, \ldots, \lambda_1 a_1\}$, 
	\item $\Lambda^{2}=(a_1 b_1+1)+\mathbb N$. 
\end{itemize}
Then we obviously have the following result. 

\begin{proposition}\label{Apery-1}
	Let $a_1$ and $b_1$ be integers greater than one. Set $\Gamma_1 = a_1\mathbb{N} \cup (a_1 b_1 + \mathbb{N})$ and let $x\in\{1,\ldots,a_1\}$.
	
	\begin{enumerate}[(1)]
		
		\item $\Ap(\Gamma_1,1)=\Lambda^1$. 
		
		\item $\Ap(\Gamma_1,x)=\Lambda^1 \cup \{a_1 b_1+1,\ldots,a_1 b_1+x-1\}$ for $1<x<a_1$. 
		
		\item $\Ap(\Gamma_1,a_1)=\{0\} \cup \{a_1 b_1 +1,\ldots,a_1 b_1 +(a_1 -1)\}$. 
		
	\end{enumerate}
	
\end{proposition}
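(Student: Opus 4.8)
The plan is to compute $\Ap(\Gamma_1,x)=\{y\in\Gamma_1\mid y-x\notin\Gamma_1\}$ directly from the definition, testing each element of $\Gamma_1$ against the membership condition and exploiting the two-block partition $\Gamma_1=\Lambda^1\sqcup\Lambda^2$ supplied by Lemma~\ref{partition}. The bookkeeping fact I would isolate first is that, since $a_1>1$, an integer lies in $\Gamma_1$ precisely when it is a nonnegative multiple of $a_1$ not exceeding the conductor $c=a_1b_1$, or when it is at least $c+1$; equivalently, the gaps of $\Gamma_1$ below $c$ are exactly the non-multiples of $a_1$. Because we only ever subtract an $x$ with $1\le x\le a_1$, the difference $y-x$ stays in a single residue window modulo $a_1$, which reduces each membership test to a one-line check on a residue and a size.

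I would then split according to the block containing $y$. For $y\in\Lambda^1$, write $y=ka_1$ with $0\le k\le b_1$. The case $k=0$ always contributes, since $-x<0\notin\Gamma_1$. For $k\ge1$ and $1\le x<a_1$, the difference $ka_1-x$ is positive, not a multiple of $a_1$ (its residue modulo $a_1$ is $-x\not\equiv0$), and bounded above by $c-1$, hence a gap; so every element of $\Lambda^1$ survives. For $x=a_1$ instead, $ka_1-x=(k-1)a_1\in\Gamma_1$ for $k\ge1$, so only $0$ survives. This already separates case~(3) from cases~(1)--(2) at the level of $\Lambda^1$.

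For $y\in\Lambda^2$, write $y=c+j$ with $j\ge1$, so that $y-x=c+(j-x)$, and I would record the trichotomy: if $j\ge x+1$ then $y-x\ge c+1\in\Lambda^2$; if $j=x$ then $y-x=c\in\Lambda^1$; and if $1\le j\le x-1$ then $y-x=c-(x-j)$ with $0<x-j<a_1$, a non-multiple of $a_1$ below $c$, hence a gap. Thus exactly the indices $1\le j\le x-1$ contribute, giving $\{c+1,\ldots,c+x-1\}$, which is empty when $x=1$. Assembling the $\Lambda^1$ and $\Lambda^2$ contributions yields formulas (1), (2), (3) respectively.

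There is no serious obstacle here; the statement is essentially a direct unwinding of the definition, as the wording \lq\lq obviously\rq\rq\ anticipates. The only points requiring care are the boundary values: the element $y=0$ (where $y-x$ is negative), the top element $y=b_1a_1=c$ of $\Lambda^1$, and the coincidence $y-x=c$ that places a difference on the last point of $\Lambda^1$ rather than in $\Lambda^2$. Tracking these, together with the single residue class $-x\bmod a_1$, is all that is needed. As a sanity check, the cardinalities come out to $b_1+1$, $b_1+x$, and $a_1$, consistent with $\sharp S_1=\lambda_1+1$ and with $\sharp\Ap(\Gamma_1,a_1)=a_1$ since $a_1\in\Gamma_1$.
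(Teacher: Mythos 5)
Your proof is correct and is exactly the direct verification the paper has in mind when it states the result as "obvious" following the partition $\Gamma_1=\Lambda^1\sqcup\Lambda^2$ of Lemma~\ref{partition}; the paper itself supplies no written proof. Your case analysis on the residue of $y-x$ modulo $a_1$ and the boundary checks (the element $0$, the top element $b_1a_1$ of $\Lambda^1$, and the coincidence $y-x=a_1b_1$) are all handled correctly, and the cardinality sanity checks match Theorem~\ref{SAi}.
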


Now for $n>1$ we can obtain the Ap\'{e}ry sets of $\Gamma_n$ in terms of those of $\Gamma_{n-1}$ as follows.

\begin{proposition}\label{Apery-n}
	Let $\Gamma_n=\Gamma(a,b)$ with $a=(a_1,\ldots, a_n),b=(b_1,\ldots, b_n)\in \mathbb N^n$, $a_i\ge 2$ and $b_{i+1}\ge a_ib_i$ for all $i$.
	
	\begin{enumerate}[(1)]
		
		\item $\Ap(\Gamma_n,1)=\Lambda^1 \cup \cdots \cup \Lambda^n$. 
		
		\item $\Ap(\Gamma_n,x)=\Lambda^1 \cup \cdots \cup \Lambda^n \cup \{a_1 b_1+1,\ldots,a_1 b_1+x-1\}$ for $1<x<a_n$. 
		
		\item $\Ap(\Gamma_n,ka_{n}) = a_n \Ap(\Gamma_{n-1},k) \cup (\{a_n b_n,a_n b_n+1,\ldots,a_n b_n+ka_n-1\}\setminus \{a_n b_n,a_n (b_n+1),\ldots, a_n (b_n+k-1)\})$, provided $k>0$ and $ka_n<A_1$. 
		
		\item Consider $k>0$ and $ka_n<A_1$. If $a_n b_n-ka_n\in\Gamma_n$, then 
		\[
		\Ap(\Gamma_n,ka_{n}+1) = \Lambda^1 \cup \cdots \cup \Lambda^n \cup \left( 1+(\Ap(\Gamma_n,ka_n)\cap\Lambda^{n+1}) \right).
		\]
		Otherwise 
		\[
		\Ap(\Gamma_n,ka_{n}+1) = \Lambda^1 \cup \cdots \cup \Lambda^n \cup \left( 1+(\Ap(\Gamma_n,ka_n)\cap\Lambda^{n+1}) \right) \cup \{a_n b_n +1\}.
		\]
		
		\item $\Ap(\Gamma_n,ka_{n}+x) = \Ap(\Gamma_n,ka_{n}+1) \cup \{a_n b_n+ka_n+1,\ldots,a_n b_n+ka_n+x-1\}$ for $1<x<a_n$, provided $k>0$ and $ka_n<A_1$. 
		
		\item $\Ap(\Gamma_n,A_1) = a_n \Ap(\Gamma_{n-1},A_{1}^{(n-1)}) \cup (\{a_n b_n,a_n b_n+1,\ldots,a_n b_n+A_{1}-1\}\setminus \{a_n b_n,a_n (b_n+1),\ldots, a_n (b_n+A_{1}^{(n-1)}-1)\})$.
	\end{enumerate}
	
\end{proposition}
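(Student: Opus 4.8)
The plan is to derive each of the six items by specializing the earlier lemmas to $\Gamma=\Gamma_{n-1}$, $a=a_n$ and $b=b_n$; this is legitimate because the defining inequality $b_n\ge a_{n-1}b_{n-1}$ says exactly that $b_n$ is at least the conductor of $\Gamma_{n-1}$, which is the hypothesis required in Lemma~\ref{induction}. Throughout I write $c=a_nb_n$ for the conductor of $\Gamma_n$. I would begin with~(1), characterizing $\Ap(\Gamma_n,1)=\{x\in\Gamma_n\mid x-1\notin\Gamma_n\}$ directly from Lemma~\ref{partition}. Since $\sum_{j=1}^n\lambda_jA_j=b_nA_n=c$, the last element of $\Lambda^n$ is $c$, so $\Lambda^1\cup\cdots\cup\Lambda^n=\Gamma_n\cap[0,c]$. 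Every element of these blocks has its $\Gamma_n$-predecessor at distance $A_k\ge a_n\ge 2$, so that predecessor is a gap; conversely each $y\in\Lambda^{n+1}$ satisfies $y-1\ge c\in\Gamma_n$. This proves~(1).

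Items~(2) and~(5) are then immediate from Lemma~\ref{linearity}: taking $k=0$ and $j=x$ gives $S_x=S_1\cup\{c+1,\dots,c+x-1\}$, which combined with~(1) yields~(2); taking the given $k>0$ and $j=x$ yields~(5) with the extra block $\{c+ka_n+1,\dots,c+ka_n+x-1\}$. Items~(3) and~(6) are the two instances of Lemma~\ref{induction}, with $t=k$ (so $at=ka_n$) and $t=A_1^{(n-1)}=A_1/a_n$ (so $at=A_1$) respectively; in both cases the removed progression $\{ab,ab+a,\dots,ab+(t-1)a\}$ rewrites as $\{a_nb_n,a_n(b_n+1),\dots,a_n(b_n+t-1)\}$, matching the statements. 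I would present~(6) separately only because it is the right endpoint $ka_n=A_1$ excluded from the range $ka_n<A_1$ of~(3).

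The one item requiring genuine work, and which I expect to be the main obstacle, is~(4). I would decompose $\Ap(\Gamma_n,ka_n+1)$ along the partition of Lemma~\ref{partition}. Exactly as in the proof of Lemma~\ref{monotony}, every element of $\Lambda^1\cup\cdots\cup\Lambda^n$ is a multiple of $a_n$ at most $c$, so its $(ka_n+1)$-translate is a gap and $S_{ka_n+1}\cap(\Lambda^1\cup\cdots\cup\Lambda^n)=\Lambda^1\cup\cdots\cup\Lambda^n$. For the tail $\Lambda^{n+1}$, the refinement needed beyond Lemma~\ref{monotony} is that the shift $y\mapsto y-1$ is a bijection from $\{y\in\Lambda^{n+1}\mid y\ge c+2\}\cap S_{ka_n+1}$ onto $S_{ka_n}\cap\Lambda^{n+1}$, since for $y\ge c+2$ the conditions $y-(ka_n+1)\notin\Gamma_n$ and $(y-1)-ka_n\notin\Gamma_n$ coincide.

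The only element of $\Lambda^{n+1}$ left uncovered by this shift is $y=c+1=a_nb_n+1$, whose predecessor $c$ lies in $\Lambda^n$ rather than $\Lambda^{n+1}$. One has $c+1\in S_{ka_n+1}$ exactly when $(c+1)-(ka_n+1)=a_nb_n-ka_n\notin\Gamma_n$, so the difference between $S_{ka_n+1}\cap\Lambda^{n+1}$ and $1+(S_{ka_n}\cap\Lambda^{n+1})$ is precisely $\{a_nb_n+1\}$ in that case and empty otherwise. Assembling the two pieces gives the two displayed formulas. The delicate points are verifying that the shift is a genuine bijection, not merely the inclusion recorded in Lemma~\ref{monotony}, and isolating $a_nb_n+1$ as the unique possible extra element.
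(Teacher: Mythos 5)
Your proposal is correct and takes essentially the same route as the paper, whose proof is a single sentence citing the proofs of Lemmas~\ref{linearity}, \ref{induction} and \ref{monotony} for parts (2),(5), parts (3),(6) and part (4) respectively; your treatment of part (4) supplies precisely the bijective refinement of the inclusion $1+(S_{ka_n}\cap\Lambda^{n+1})\subseteq S_{ka_n+1}\cap\Lambda^{n+1}$ that the paper leaves implicit. The only discrepancy is that your $a_nb_n+1,\ldots,a_nb_n+x-1$ in item (2) is the correct form, silently fixing what appears to be a typo ($a_1b_1$, carried over from Proposition~\ref{Apery-1}) in the paper's statement.
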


\begin{proof}
	
	Part (1) is trivial, parts (2) and (5) follow from the proof of Lemma \ref{linearity}. Parts (3) and (6) follow from the proof of Lemma \ref{induction}, and part (4) follows from the proof of Lemma \ref{monotony}. 
\end{proof}

By combining in a systematic way the above two propositions, we get an algorithm to compute $\Ap(\Gamma_n,x)$ for every $x\in\{1,\ldots,A_1\}$.  It is well known that the genus of a numerical semigroup can be computed from the Ap\'ery set of any of its elements. We show next how to apply this for inductive numerical semigroups.

\begin{corollary}
	The genus of $\Gamma_n$ equals $\sum_{i=1}^n b_i(a_i-1)$.
\end{corollary}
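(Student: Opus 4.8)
The plan is to prove the equivalent statement that the number of gaps of $\Gamma_n$ equals $\sum_{i=1}^n b_i(a_i-1)$, by setting up a recursion in $n$ that peels off the last inductive step. Writing $g(\Gamma)$ for the genus (the number of gaps), I would establish
\[
g(\Gamma_n) = g(\Gamma_{n-1}) + b_n(a_n-1),
\]
together with the base case $g(\Gamma_1)=b_1(a_1-1)$; summing this telescoping recursion then yields the claimed formula immediately.

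To establish the recursion, I would use the defining relation $\Gamma_n = a_n\Gamma_{n-1}\cup(a_nb_n+\mathbb N)$, which is exactly the situation of Lemma \ref{induction} with $\Gamma=\Gamma_{n-1}$, $a=a_n$ and $b=b_n$; the hypothesis $b_n\ge c(\Gamma_{n-1})$ needed there holds because the conductor of $\Gamma_{n-1}$ is $a_{n-1}b_{n-1}$ and $b_n\ge a_{n-1}b_{n-1}$ by assumption. Since the conductor of $\Gamma_n$ is $a_nb_n$, every gap lies in $[0,a_nb_n)$, and I would count these gaps by sorting each integer $m<a_nb_n$ according to divisibility by $a_n$. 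The integers in $[0,a_nb_n)$ that are not multiples of $a_n$ are never in $\Gamma_n$, and there are exactly $a_nb_n-b_n=b_n(a_n-1)$ of them. An integer of the form $a_n\gamma$ with $0\le \gamma<b_n$ belongs to $\Gamma_n$ precisely when $\gamma\in\Gamma_{n-1}$, so such an $m$ is a gap exactly when $\gamma$ is a gap of $\Gamma_{n-1}$.

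The one point that needs care — and the only real obstacle — is to argue that this second family contributes \emph{all} of the gaps of $\Gamma_{n-1}$, namely $g(\Gamma_{n-1})$ of them, rather than only those below $b_n$. This is again where $b_n\ge a_{n-1}b_{n-1}=c(\Gamma_{n-1})$ is used: every gap $\gamma$ of $\Gamma_{n-1}$ satisfies $\gamma<c(\Gamma_{n-1})\le b_n$, so each such $\gamma$ does produce a distinct gap $a_n\gamma<a_nb_n$ of $\Gamma_n$. Adding the two disjoint contributions gives $g(\Gamma_n)=b_n(a_n-1)+g(\Gamma_{n-1})$, completing the recursion.

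The base case is the direct count for $\Gamma_1=a_1\mathbb N\cup(a_1b_1+\mathbb N)$: the multiples of $a_1$ in $[0,a_1b_1)$ are exactly the $b_1$ elements $0,a_1,\dots,(b_1-1)a_1$, all of which lie in $\Gamma_1$, leaving $a_1b_1-b_1=b_1(a_1-1)$ gaps. Equivalently, one could run the same recursion starting from $\Gamma_0=\mathbb N$, for which $g(\Gamma_0)=0$. An alternative route would be to feed the explicit Apéry sets of Propositions \ref{Apery-1} and \ref{Apery-n} into the classical formula expressing the genus through the sum of the Apéry-set elements, but the direct gap count above is shorter and avoids tracking those sums through the recursion.
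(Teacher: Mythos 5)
Your proof is correct, and it takes a genuinely different route from the paper. The paper derives the same recursion $g_n=g_{n-1}+b_n(a_n-1)$ by applying Selmer's formula $g_n = \frac{1}{A_1}\bigl(\sum_{w\in \Ap(\Gamma_n,A_1)} w\bigr)-\frac{A_1-1}{2}$ to the explicit decomposition of $\Ap(\Gamma_n,A_1)$ given in Proposition \ref{Apery-n}(6), then cancelling the various sums; this is exactly the ``alternative route'' you mention and dismiss at the end, and its merit is that it reuses the Ap\'ery-set machinery already built in that section. Your argument is more elementary and self-contained: you count the gaps of $\Gamma_n$ directly inside $[0,a_nb_n)$, splitting them into the $b_n(a_n-1)$ non-multiples of $a_n$ (automatically gaps, since $\Gamma_n\setminus a_n\mathbb N\subseteq a_nb_n+\mathbb N$) and the multiples $a_n\gamma$ with $\gamma<b_n$, which are gaps exactly when $\gamma$ is a gap of $\Gamma_{n-1}$. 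You correctly isolate the one delicate point, namely that the condition $b_n\ge a_{n-1}b_{n-1}=c(\Gamma_{n-1})$ guarantees that \emph{every} gap of $\Gamma_{n-1}$ lies below $b_n$ and hence is captured, so the second family contributes exactly $g(\Gamma_{n-1})$; without that hypothesis the count would undershoot. The citation of Lemma \ref{induction} is not actually needed for your argument (you only use the defining relation $\Gamma_n=a_n\Gamma_{n-1}\cup(a_nb_n+\mathbb N)$ and the value of the conductor), but it does no harm. Both proofs are valid; yours is shorter and avoids tracking sums of Ap\'ery elements, while the paper's illustrates how the genus falls out of the Ap\'ery sets it has just computed.
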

\begin{proof}
	We use induction on $n$. For $n=0$, the result follows trivially. Assume that the statement holds for $n$ and let us prove it for $n+1$. 
	
	Denote by $g_i$ the genus of $\Gamma_i$, $i\in\{1,\ldots,n\}$. By using Selmer's formula for the genus (\cite[Proposition 2.12]{NS}; $A_1\in \Gamma_n^*$),
	\[ g_n = \frac{1}{A_1}\left(\sum_{w\in \Ap(\Gamma_n,A_1)} w\right)-\frac{A_1-1}2.\]
	By Proposition \ref{Apery-n} (6), we can split $\Ap(\Gamma_n, A_1)$ into $a_n\Ap\left(\Gamma_{n-1}, A_1^{(n-1)}\right)$ and $(\{a_n b_n,a_n b_n+1,\ldots,a_n b_n+A_{1}-1\}\setminus \{a_n b_n,a_n (b_n+1),\ldots, a_n (b_n+A_{1}^{(n-1)}-1)\})$. The sum of the elements in the first set divided by $A_1$ is precisely $g_{n-1}+(A_1^{(n-1)}-1)/2$ (Selmer's formula for $\Gamma_{n-1}$). Now  
	\[
	\frac{1}{A_1}\sum_{i=0}^{A_1-1}(a_nb_n+i)= a_nb_n+\frac{A_1-1}2
	\]
	and 
	\[
	\frac{1}{A_1}\sum_{i=0}^{A_1^{(n-1)}-1}a_n(b_n+i)= \frac{1}{A_1^{(n-1)}}\sum_{i=0}^{A_1^{(n-1)}-1}(b_n+i) =
	b_n+\frac{A_1^{(n-1)}-1}2.
	\]
	Hence 
	\[
	g_n=g_{n-1}+\frac{A_1^{(n-1)}-1}2 -\frac{A_1-1}2 + a_nb_n+\frac{A_1-1}2 - b_n-\frac{A_1^{(n-1)}-1}2 = g_{n-1}+b_n(a_n-1). 
	\]
	Now by using the induction hypothesis, we obtain the desired result.
\end{proof}

\subsection{Towers of function fields}

We study now the particular case of semigroups coming from asymptotically good towers of function fields. In this setting the sequence $a_i$ is constant.

\begin{corollary}\label{particular}

Let $\Gamma_n$ be an inductive numerical semigroup with $a_n=a\geq 2$ for all $n\geq 1$. Then 
\[
\mathrm E(\Gamma_n,2)= \min\left\{1+\sum_{i=1}^{n}\lambda_{i},a+\sum_{i=1}^{n-1}\lambda_{i},\ldots,a^{k}+\sum_{i=1}^{n-k}\lambda_{i},\ldots,a^{n-2}+\sum_{i=1}^{2}\lambda_{i},a^{n-1}+\lambda_{1},a^{n}\right\}.
\]
In particular, if also $\lambda_n=\lambda$ for all $n\geq 1$, then 
\[
\mathrm E(\Gamma_n,2)=\min\{1+n\lambda,a+(n-1)\lambda,\ldots,a^{k}+(n-k)\lambda,\ldots,a^{n-2}+2\lambda,a^{n-1}+\lambda,a^{n}\}.
\]

\end{corollary}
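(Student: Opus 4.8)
The plan is to obtain this statement as a direct specialization of Theorems \ref{E2} and \ref{SAi} to the hypothesis that the sequence $(a_1,\ldots,a_n)$ is constant, equal to $a$. First I would record the elementary observation that, under this hypothesis, the products $A_i=\prod_{j=i}^n a_j$ collapse to pure powers of $a$: since each factor equals $a$ and there are $n-i+1$ of them, one has $A_i=a^{\,n-i+1}$. In particular $A_n=a$, $A_{n-1}=a^2$, and more generally $A_{n-k}=a^{\,k+1}$, down to $A_1=a^n$.

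Next I would feed these values into Theorem \ref{SAi}. That theorem gives $\sharp S_1=\lambda_1+\cdots+\lambda_n+1$ and $\sharp S_{A_{n-k}}=\lambda_1+\cdots+\lambda_{n-k-1}+A_{n-k}$ for $k\in\{0,\ldots,n-1\}$. Substituting $A_{n-k}=a^{\,k+1}$ turns the latter into $\sharp S_{A_{n-k}}=a^{\,k+1}+\sum_{i=1}^{n-k-1}\lambda_i$. Reading off the two endpoints, $k=0$ yields $\sharp S_{A_n}=a+\sum_{i=1}^{n-1}\lambda_i$ and $k=n-1$ yields $\sharp S_{A_1}=a^n$ (the empty sum being zero), which also recovers the known equality $\sharp S_{A_1}=A_1$.

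With these explicit values in hand, I would invoke Theorem \ref{E2}, which asserts that $\mathrm E(\Gamma_n,2)$ is the minimum of $\sharp S_1$ together with $\sharp S_{A_n},\sharp S_{A_{n-1}},\ldots,\sharp S_{A_1}$. Substituting the expressions above produces exactly the displayed minimum; the only point requiring care is the bookkeeping between the two index conventions, namely the labelling $A_n,\ldots,A_1$ used in Theorem \ref{E2} and the exponents $a^k$ appearing in the statement (the term $a^k+\sum_{i=1}^{n-k}\lambda_i$ corresponds to $\sharp S_{A_{n-k+1}}$), so I would list the terms in the natural order to make the re-indexing transparent. Finally, the stated special case is immediate: imposing $\lambda_i=\lambda$ for all $i$ replaces each partial sum $\sum_{i=1}^{n-k}\lambda_i$ by $(n-k)\lambda$, giving the second formula. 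The argument is thus a routine substitution, and there is no genuine obstacle beyond keeping the two indexing schemes aligned.
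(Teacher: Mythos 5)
Your proposal is correct and coincides with the paper's own (implicit) argument: the corollary is presented there as an immediate specialization of Theorems \ref{E2} and \ref{SAi}, and your substitution $A_{n-k}=a^{k+1}$ together with the re-indexing $a^{k}+\sum_{i=1}^{n-k}\lambda_i=\sharp S_{A_{n-k+1}}$ is exactly the intended bookkeeping. Nothing further is needed.
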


Let us  consider the tower of function fields $({\mathcal T}_{n})$ over $\mathbb{F}_{q^2}$, 
where ${\mathcal T}_{1}=\mathbb{F}_{q^{2}}(x_{1})$ and for $n\geq 2$, ${\mathcal T}_{n}$ is obtained from ${\mathcal T}_{n-1}$ 
by adjoining a new element $x_{n}$ satisfying the equation
\[
x_{n}^{q}+x_{n}=\frac{x_{n-1}^{q}}{x_{n-1}^{q-1}+1}.
\]
This tower was introduced by  Garc\'{\i}a and Stichtenoth in \cite{GS}, and it attains the Drinfeld-Vl\u{a}du\c{t} bound. Thus, error-correcting AG codes coming from this tower have great interest because of its asymptotically good behaviour.  

Let $Q_{n}$ be the rational place on ${\mathcal T}_{n}$ that is the unique pole of $x_{1}\,$. 
It is known that the Weierstrass semigroups $\Gamma_{n}$ of ${\mathcal T}_{n}$ at $Q_{n}$ are as follows: $\Gamma_1=\mathbb{N}$, and for $n\geq 2$,
\[
\Gamma_n = q \cdot \Gamma_{n-1} \cup \{ m\in\mathbb{N} \mid m\geq c_n\},
\]
where 
\[
c_{n}=\left\{\begin{array}{ll}
q^{n}-q^{\frac{n+1}{2}} & \mbox{if $n$ is odd}, \\
q^{n}-q^{\frac{n}{2}} & \mbox{if $n$ is even}. \end{array}\right. 
\]
Thus, the numerical semigroups $\Gamma_n$ are inductive (up to a change of indices, since $n\geq 1$ in this case). We can then apply the previous formulas, assuming $a_1=1$ and $\lambda_1=0$, as follows: first note that $a_{n}=q$ for all $n\geq 2$, and 
\[
b_{n}=\frac{c_{n}}{a_{n}}=\left\{
\begin{array}{ll}
q^{n-1}-q^{\frac{n-1}{2}}&\mbox{if $n$ is odd},\\
q^{n-1}-q^{\frac{n-2}{2}}&\mbox{if $n$ is even},
\end{array}
\right.
\]
so that $\lambda_2=b_2=q-1$. For $n\neq 3$, we have 
\[
\lambda_{n}=b_{n}-c_{n-1}=\left\{
\begin{array}{ll}
0&\mbox{if $n$ is odd},\\
(q-1)q^{\frac{n-2}{2}}&\mbox{if $n$ is even}.
\end{array}
\right.
\]
Notice that the formula still holds true for $n=2$. As a consequence, we have the following result. 

\begin{lemma}\label{tower}
With the notations above, let $\Gamma_n$ ($n\geq 2$)  be the Weierstrass semigroup of the function field ${\mathcal T}_{n}$ at $Q_{n}$. 
Write $n=2m+b$ with $b\in\{0,1\}$. 
\begin{enumerate}[(1)]
\item $A_{n-k}=q^{k+1}$, for $0\leq k\leq n-2$. 

\item $\sharp S_{q^{n-1}}=q^{n-1}$. 

\item $\sharp S_{1}=q^m=q^{\lfloor \frac{n}{2}\rfloor}$. 

\item If $n=2m$, then for $i\in\{1,\ldots,n-2\}$, $\sharp S_{q^{i}}=(q^{\lfloor m-\frac{i}{2}\rfloor}-1)+q^{i}$. 

\item If $n=2m+1$, then for $i\in\{1,\ldots,n-2\}$, $\sharp S_{q^{i}}=(q^{\lceil m-\frac{i}{2}\rceil}-1)+q^{i}$. 
\end{enumerate}
\end{lemma}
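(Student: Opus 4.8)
The plan is to read everything off from Theorem~\ref{SAi}, once the tower data are put in the form that theorem requires. Recall from the preceding discussion that in the inductive presentation of $\Gamma_n$ one has $a_1 = 1$, $a_j = q$ for $2 \le j \le n$, together with $\lambda_1 = 0$, $\lambda_2 = q-1$, and, for $j \ge 3$,
\[
\lambda_j = \begin{cases} 0 & j \text{ odd},\\ (q-1)q^{(j-2)/2} & j \text{ even}. \end{cases}
\]
The only delicacy throughout is that the convention $a_1 = 1$, $\lambda_1 = 0$ introduces a one-index shift relative to the genuine inductive steps, so I would keep careful track of which index of $A$ and of which partial $\lambda$-sum is involved.

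First I would dispose of part~(1), which is purely definitional: since $A_{n-k} = \prod_{j=n-k}^{n} a_j$ and every factor $a_j$ with $j \ge 2$ equals $q$, for $0 \le k \le n-2$ (so that $n-k \ge 2$) one gets $A_{n-k} = q^{k+1}$; in particular $A_1 = q^{n-1}$. This is the step that lets me recognize the arguments appearing in (2)--(5) as the distinguished Ap\'ery moduli of Theorem~\ref{SAi}, namely $q^{n-1} = A_1$ and $q^{i} = A_{n-i+1}$. Part~(2) then follows at once, since $\sharp S_{A_1} = A_1$ (the empty-sum case of Theorem~\ref{SAi}, equivalently the remark that $A_1 \in \Gamma_n$), giving $\sharp S_{q^{n-1}} = q^{n-1}$.

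The computational heart of the remaining parts is a single geometric-series evaluation: for any $N \ge 1$,
\[
\sum_{l=1}^{N} \lambda_l = \sum_{1 \le 2j \le N} (q-1)q^{j-1} = (q-1)\,\frac{q^{\lfloor N/2\rfloor} - 1}{q-1} = q^{\lfloor N/2\rfloor} - 1,
\]
where only even indices contribute because $\lambda_1 = 0$ and $\lambda_j = 0$ for odd $j \ge 3$. Feeding $N = n$ into the formula $\sharp S_1 = \lambda_1 + \cdots + \lambda_n + 1$ of Theorem~\ref{SAi} yields $\sharp S_1 = q^{\lfloor n/2\rfloor} = q^m$, which is part~(3). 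For parts~(4) and~(5) I would write $q^i = A_{n-i+1}$ and apply $\sharp S_{A_{n-k}} = \lambda_1 + \cdots + \lambda_{n-k-1} + A_{n-k}$ with $k = i-1$; since $i \in \{1,\ldots,n-2\}$ keeps $k$ in the admissible range $\{0,\ldots,n-3\}$, this gives
\[
\sharp S_{q^i} = \sum_{l=1}^{n-i} \lambda_l + q^i = \bigl(q^{\lfloor (n-i)/2\rfloor} - 1\bigr) + q^i.
\]

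What remains---and this is the only place requiring genuine care rather than bookkeeping---is to reconcile the uniform exponent $\lfloor (n-i)/2 \rfloor$ with the floor/ceiling expressions in the statement, by splitting on the parity of $n$. When $n = 2m$ one has $(n-i)/2 = m - i/2$, so $\lfloor (n-i)/2\rfloor = \lfloor m - i/2\rfloor$ directly, proving~(4). When $n = 2m+1$ one has $\lfloor (n-i)/2\rfloor = \lfloor m - (i-1)/2\rfloor$, and a short check in the cases $i$ even and $i$ odd shows this equals $\lceil m - i/2\rceil$, proving~(5). I expect this floor/ceiling matching, together with avoiding off-by-one slips caused by the $a_1 = 1$ convention, to be the main (and fairly mild) obstacle; everything else is a direct substitution into Theorem~\ref{SAi}.
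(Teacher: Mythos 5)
Your proposal is correct and follows essentially the same route as the paper: identify $q^{n-1}=A_1$ and $q^i=A_{n-i+1}$, substitute into Theorem~\ref{SAi}, and sum the geometric series of the $\lambda_j$. The only cosmetic difference is that you derive the uniform exponent $\lfloor (n-i)/2\rfloor$ once and then reconcile it with the floor/ceiling expressions, whereas the paper works through the four parity cases of $n$ and $i$ separately; the computations agree.
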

\begin{proof}

(1) follows from the definition of $A_j$ and that $a_1=1$.  

(2) follows from the fact that $q^{n-1}$ is the multiplicity of $\Gamma_n$ and thus it belongs to $\Gamma_n$.

In order to prove (3), we apply the formula $\sharp S_{1} = \lambda_{1}+\cdots+\lambda_{n} + 1$ from Theorem \ref{SAi}. Thus, from the above formulas for $\lambda_i$ we get 
\[
\sharp S_{1} = 1 + (q-1) + (q-1)q + \cdots + (q-1)q^{m-1} = 1 + (q-1)(1+q+\cdots+q^{m-1}) = q^{m}.
\]

For (4) and (5), we apply the formula $\sharp S_{A_{n-k}} = \lambda_{1}+\cdots+\lambda_{n-k-1} + A_{n-k}$ also from Theorem \ref{SAi}. 
Note first that $a_1=1$ and $\lambda_1=0$, so that $A_{n-1}=A_{n}$, and $\sharp S_1$ is not relevant when applying Theorem \ref{E2}. 

Thus, if $n=2m$ is even and $i=2j>0$ is even, then 
\begin{multline*}
\sharp S_{A_{n-i+1}} = \sharp S_{q^{i}} = \lambda_{1}+\cdots+\lambda_{n-i} + A_{n-i+1} = [(q-1)(1+q+\cdots+q^{\frac{n-i-2}{2}})] + q^{i}\\ 
=[(q-1)(1+q+\cdots+q^{m-j-1})] + q^{i} = (q^{m-j} - 1) + q^{i} = (q^{m-\frac{i}{2}} - 1) + q^{i} = (q^{\lfloor m-\frac{i}{2}\rfloor} - 1) + q^{i}.
\end{multline*}
Analogously, if $n=2m$ is even and $i=2j+1>0$ is odd, then 
\begin{multline*}
\sharp S_{A_{n-i+1}} = \sharp S_{q^{i}} = \lambda_{1}+\cdots+\lambda_{n-i} + A_{n-i+1} = [(q-1)(1+q+\cdots+q^{\frac{n-i-3}{2}})] + q^{i}\\
=[(q-1)(1+q+\cdots+q^{m-j-2})] + q^{i} = (q^{m-j-1} - 1) + q^{i} = (q^{m-\frac{i+1}{2}} - 1) + q^{i} = (q^{\lfloor m-\frac{i}{2}\rfloor} - 1) + q^{i}.
\end{multline*}
The calculations for $n=2m+1$ are similar. In fact, if $n=2m+1$ is odd and $i=2j>0$ is even, then 
\begin{multline*}
\sharp S_{A_{n-i+1}} = \sharp S_{q^{i}} = \lambda_{1}+\cdots+\lambda_{n-i} + A_{n-i+1} = [(q-1)(1+q+\cdots+q^{\frac{n-i-3}{2}})] + q^{i} \\
=[(q-1)(1+q+\cdots+q^{m-j-1})] + q^{i} = (q^{m-j} - 1) + q^{i} = (q^{m-\frac{i}{2}} - 1) + q^{i} = (q^{\lceil m-\frac{i}{2}\rceil} - 1) + q^{i}.
\end{multline*}
Finally, if $n=2m+1$ is odd and $i=2j+1>0$ is odd,  we obtain 
\begin{multline*}
\sharp S_{A_{n-i+1}} = \sharp S_{q^{i}} = \lambda_{1}+\cdots+\lambda_{n-i} + A_{n-i+1} = [(q-1)(1+q+\cdots+q^{\frac{n-i-2}{2}})] + q^{i} \\
=[(q-1)(1+q+\cdots+q^{m-j-1})] + q^{i} = (q^{m-j} - 1) + q^{i} = (q^{m-\frac{i-1}{2}} - 1) + q^{i} = (q^{\lceil m-\frac{i}{2}\rceil} - 1) + q^{i}.\qedhere
\end{multline*}

\end{proof}

Note that $\lambda_i=0$ if $i$ is odd, and thus we can skip half of the values in Theorem \ref{SAi}, in order to get the minimum in Theorem \ref{E2}. In fact, by looking at the formulas (4) and (5) in the above lemma, we need just to consider $\sharp S_{q^{i}}$ with $i$ even if $n$ is odd, and $\sharp S_{q^{i}}$ for $i$ odd if $n$ is even. Thus one has the following result. 

\begin{proposition}\label{reduction}
Under the standing notation and hypothesis, the second Feng-Rao number of the Weierstrass semigroup $\Gamma_n$ of the function field ${\mathcal T}_{n}$ at $Q_{n}$ is given by the minimum of the following numbers: 
\[
\begin{array}{l}
\sharp S_{1}=q^{\lfloor \frac{n}{2}\rfloor},\\
\sharp S_{q^{n-1}}=q^{n-1},\\
\sharp S_{q^{n-1-2k}}=(q^{k}-1)+q^{n-1-2k},\mbox{ for }k\in\{1,\ldots,\lfloor \frac{n}{2}\rfloor-1\}.
\end{array}
\]
\end{proposition}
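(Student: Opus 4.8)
The plan is to start from Theorem~\ref{E2} and Lemma~\ref{tower}, rewrite the minimum entirely in terms of powers of $q$, and then carry out a parity-based reduction of the candidate list, exploiting the fact that half of the $\lambda_i$ vanish.

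First I would translate Theorem~\ref{E2} into the language of this tower. By Lemma~\ref{tower}(1), $A_{n-k}=q^{k+1}$ for $0\le k\le n-2$, and since $a_1=1$ one has $A_1=A_2=q^{n-1}$; hence the distinct values among $A_1,\ldots,A_n$ are exactly $q,q^2,\ldots,q^{n-1}$. Therefore Theorem~\ref{E2} reads
\[
\mathrm E(\Gamma_n,2)=\min\{\sharp S_{q^i}\mid i=0,1,\ldots,n-1\},
\]
where I write $\sharp S_{q^0}=\sharp S_1$. Using Theorem~\ref{SAi} together with $A_{n-i+1}=q^i$, all these cardinalities admit a single closed form,
\[
\sharp S_{q^i}=\lambda_1+\cdots+\lambda_{n-i}+q^i,\qquad i=0,1,\ldots,n-1,
\]
the case $i=0$ reproducing $\sharp S_1=\lambda_1+\cdots+\lambda_n+1$.

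The key step is a pairwise comparison of consecutive terms. Subtracting,
\[
\sharp S_{q^{i-1}}-\sharp S_{q^i}=\lambda_{n-i+1}-q^{i-1}(q-1),
\]
so whenever $\lambda_{n-i+1}=0$ this difference is strictly negative and $\sharp S_{q^i}$ is dominated by $\sharp S_{q^{i-1}}$. Recalling that $\lambda_j=0$ precisely when $j$ is odd, the coefficient $\lambda_{n-i+1}$ vanishes exactly when $n-i+1$ is odd, i.e.\ when $i\equiv n\pmod 2$. Thus every candidate $\sharp S_{q^i}$ with $i\equiv n\pmod 2$ and $i\ge 1$ may be discarded, since it strictly exceeds $\sharp S_{q^{i-1}}$, whose index satisfies $i-1\equiv n-1\pmod 2$ and is therefore retained (the case $i-1=0$ being $\sharp S_1$). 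Because each discarded term strictly exceeds a surviving one, the value of the minimum is unaffected.

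Finally I would read off the surviving list. The retained exponents are $i=0$ (giving $\sharp S_1$) together with those $i\in\{1,\ldots,n-1\}$ with $i\equiv n-1\pmod 2$; writing these as $i=n-1-2k$ gives $k=0,1,\ldots,\lfloor n/2\rfloor-1$. The term $k=0$ is $\sharp S_{q^{n-1}}=q^{n-1}$ by Lemma~\ref{tower}(2), while $\sharp S_1=q^{\lfloor n/2\rfloor}$ by Lemma~\ref{tower}(3). For $k\ge 1$, substituting $i=n-1-2k$ into Lemma~\ref{tower}(4) (when $n$ is even) or (5) (when $n$ is odd) collapses the floor and ceiling to $k$ in both cases, yielding $\sharp S_{q^{n-1-2k}}=(q^k-1)+q^{n-1-2k}$, which is exactly the claimed list. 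The main obstacle I anticipate is the parity bookkeeping: one must verify carefully that the vanishing condition $\lambda_{n-i+1}=0$ matches the surviving range $k\in\{1,\ldots,\lfloor n/2\rfloor-1\}$, and that the floor/ceiling in Lemma~\ref{tower}(4)--(5) simplify uniformly to $q^k$ under the substitution $i=n-1-2k$.
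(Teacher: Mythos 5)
Your proposal is correct and follows essentially the same route as the paper: both start from Theorem~\ref{E2} together with Theorem~\ref{SAi}/Lemma~\ref{tower}, observe that $\lambda_j=0$ for odd $j$, and discard each candidate $\sharp S_{q^i}$ with $i\equiv n\pmod 2$ by comparing it with its neighbour $\sharp S_{q^{i-1}}$ (the paper phrases the comparison via the closed forms of Lemma~\ref{tower}(4)--(5), you via the difference $\lambda_{n-i+1}-q^{i-1}(q-1)$, which is the same computation). Your parity bookkeeping and the final substitution $i=n-1-2k$ check out.
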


\begin{proof}
Note that, according the formulas given in Lemma \ref{tower}, the missing numbers are not relevant for the minimum, since 
\[
\sharp S_{q^{n-1-2k}}=(q^{k}-1)+q^{n-1-2k} < (q^{k}-1)+q^{n-2k} = \sharp S_{q^{n-2k}}
\]
if $n$ is even; and if $n=2m+1$ is odd, then 
\[
\sharp S_{1} = q^{\lfloor \frac{n}{2}\rfloor} = q^{m} \leq (q^{m} - 1) + q = \sharp S_{q}
\]
(note that in both cases $q>1$). 
\end{proof}

Now we are able to compute explicitly the above minimum, and get the second Feng-Rao number as a consequence. First we need a technical lemma. 

\begin{lemma}\label{technical}

Assume that $q\geq 2$, $0\leq i\leq j$ and $0\leq k\leq l$. 
\begin{enumerate}[(1)]
\item If $j=l$ and $i<k$, then $q^{i}+q^{j}<q^{k}+q^{l}$. 

\item If $j<l$, then $q^{i}+q^{j}<q^{k}+q^{l}$. 
\end{enumerate}
\end{lemma}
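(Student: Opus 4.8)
The plan is to treat this as an elementary estimate on sums of two powers of $q$, exploiting that the powers grow geometrically and that the \emph{larger} exponent dominates the sum by a wide margin when $q\geq 2$. The key quantitative fact I would establish first is the geometric-series bound
\[
q^{0}+q^{1}+\cdots+q^{s} = \frac{q^{s+1}-1}{q-1} \leq q^{s+1}-1 < q^{s+1}
\]
for $q\geq 2$, and in particular the cruder but more convenient consequence $q^{s}\leq q^{0}+\cdots+q^{s}<q^{s+1}$, i.e.\ every power $q^{s}$ is strictly larger than the sum of all strictly smaller powers. This single observation is what drives both parts, so I would state it up front as the workhorse inequality.

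For part (1), where $j=l$ and $i<k$, the two large terms $q^{j}$ and $q^{l}$ coincide and cancel out of the desired inequality $q^{i}+q^{j}<q^{k}+q^{l}$, which reduces immediately to $q^{i}<q^{k}$; this holds because $i<k$ and $q\geq 2>1$, so the strictly increasing nature of $s\mapsto q^{s}$ finishes it. This case is essentially trivial and I would dispatch it in one line.

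The substance is in part (2), where $j<l$. Here I would argue that the right-hand side is dominated by its top term: since $j<l$ forces $j+1\leq l$, the workhorse bound gives $q^{i}+q^{j}\leq q^{j}+q^{j}=2q^{j}\leq q^{j+1}\leq q^{l}<q^{k}+q^{l}$, using $i\leq j$ in the first step, $2\leq q$ in the third, and $j+1\leq l$ together with monotonicity in the fourth. The final strict inequality comes for free from the nonnegative term $q^{k}$. I would present this as a short chain of inequalities in a single display, being careful to track where each hypothesis ($i\leq j$, $q\geq 2$, $j<l$, $k\geq 0$) is invoked.

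The main thing to watch — the only place a naive argument could go wrong — is the boundary behaviour of the exponents: one must confirm that the bound $2q^{j}\leq q^{j+1}$ genuinely requires $q\geq 2$ (it fails for $q=1$, which is exactly why the hypothesis is present), and that $i\leq j$ rather than $i<j$ still suffices in part (2). Neither causes trouble, so I expect no real obstacle; the lemma is a purely arithmetic stepping stone, and the care lies only in keeping the inequalities strict in the right places.
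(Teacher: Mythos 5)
Your proof is correct and follows essentially the same route as the paper: part (1) by cancelling the common term $q^{j}=q^{l}$ and using monotonicity, and part (2) via the chain $q^{i}+q^{j}\leq 2q^{j}\leq q^{j+1}\leq q^{l}<q^{k}+q^{l}$, which is exactly the (more tersely stated) argument in the paper. Your version merely spells out the intermediate steps more explicitly.
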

\begin{proof}
The first case is obvious. For the second case note that, since $q\geq 2$ and $q^{l}>q^{j}\geq q^{i}$, one has 
$
q^{j} + q^{i} \leq q^{l}
$
and $q^{k}>0$. 
\end{proof}

\begin{theorem}\label{FR2}
Let $\Gamma_n$ be the Weierstrass semigroup of the function field ${\mathcal T}_{n}$ at $Q_{n}$. 

\begin{enumerate}[(1)]

\item $\mathrm E(\Gamma_1,2)=1$. 

\item $\mathrm E(\Gamma_2,2)=\mathrm E(\Gamma_3,2)=q$. 

\item $\mathrm E(\Gamma_4,2)=2q-1$. 

\item $\mathrm E(\Gamma_5,2)=q^{2}$. 

\item For $n\geq 6$, $\mathrm E(\Gamma_n,2)=q^{\lceil\frac{n-1}{3}\rceil}+q^{n-1-2\lceil\frac{n-1}{3}\rceil}-1$. 

\end{enumerate}

\end{theorem}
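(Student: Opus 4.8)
The plan is to reduce everything to the explicit finite minimization furnished by Proposition \ref{reduction} and then to analyze it with Lemma \ref{technical}. The cases $n\le 5$ I would dispatch by hand: for each such $n$ the list in Proposition \ref{reduction} contains only one or two nontrivial terms, and a direct comparison (using $q\ge 2$) gives the values $1,q,q,2q-1,q^{2}$ asserted in items (1)--(4); for $n=1$ one uses $\Gamma_1=\mathbb N$ directly, where $\rho_2=1$ forces $\mathrm E=\sharp S_1=1$. So the content is entirely in item (5).

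For $n\ge 6$ the real work is to minimize $f(k):=\sharp S_{q^{n-1-2k}}=q^{k}+q^{\,n-1-2k}-1$ over $k\in\{1,\ldots,\lfloor n/2\rfloor-1\}$ and then to compare the outcome against the two endpoint terms $\sharp S_1=q^{\lfloor n/2\rfloor}$ and $\sharp S_{q^{n-1}}=q^{n-1}$. The key observation is that Lemma \ref{technical} equips sums of two powers of $q$ with a total preorder: $q^{a}+q^{b}$ is governed lexicographically, first by $\max(a,b)$ and then by $\min(a,b)$. Applying this to the exponent pair $(k,\,n-1-2k)$, whose maximum equals $n-1-2k$ for $k\le (n-1)/3$ and equals $k$ for $k\ge (n-1)/3$, the quantity $\max(k,n-1-2k)$ is unimodal (first decreasing, then increasing in $k$), so $f$ is minimized at one of the two integers nearest to $(n-1)/3$.

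I would then split according to the residue $s\in\{0,1,2\}$ of $n-1$ modulo $3$ and evaluate the pair $\bigl(\max,\min\bigr)$ of exponents at $k=\lfloor(n-1)/3\rfloor$ and $k=\lceil(n-1)/3\rceil$. In each case Lemma \ref{technical} selects $k^{*}=\lceil(n-1)/3\rceil$: for $s=2$ the ceiling strictly lowers the maximal exponent, for $s=1$ the two candidates share the maximal exponent $\lceil(n-1)/3\rceil$ but the ceiling lowers the minimal one, and for $s=0$ floor and ceiling coincide. Substituting $k^{*}$ into $f$ reproduces exactly the stated formula $q^{\lceil(n-1)/3\rceil}+q^{\,n-1-2\lceil(n-1)/3\rceil}-1$.

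Finally I would verify the two remaining inequalities by a single bound. Writing $M^{*}=\max\bigl(k^{*},\,n-1-2k^{*}\bigr)$, the residue analysis gives $M^{*}\le (n+1)/3$, and for $n\ge 6$ one has $(n+1)/3<(n-1)/2\le\lfloor n/2\rfloor$, whence $M^{*}\le\lfloor n/2\rfloor-1$. This does all the leftover work at once: it yields $k^{*}\le M^{*}\le\lfloor n/2\rfloor-1$, so $k^{*}$ indeed lies in the admissible range, and it gives $f(k^{*})<2q^{M^{*}}\le q^{\lfloor n/2\rfloor}\le q^{n-1}$ (using $q\ge 2$ and $\lfloor n/2\rfloor\le n-1$), so the interior minimum strictly beats both endpoint terms. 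The main obstacle is precisely the bookkeeping of this last step interacting with the residue cases: keeping the floors, ceilings, and the parity of $n$ mutually consistent so that the threshold $n\ge 6$ is exactly what is needed both for $k^{*}$ to stay in range and to dominate $q^{\lfloor n/2\rfloor}$.
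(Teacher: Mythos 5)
Your proposal is correct and follows essentially the same route as the paper: reduce to the finite list in Proposition \ref{reduction}, treat $n\le 5$ by direct inspection, and for $n\ge 6$ use the lexicographic comparison of Lemma \ref{technical} together with a residue analysis modulo $3$ to locate the minimizer at $k^*=\lceil\frac{n-1}{3}\rceil$. The only (harmless) variation is in how you rule out the endpoint terms: you bound the maximal exponent of the actual minimizer by $(n+1)/3<\lfloor n/2\rfloor$ for $n\ge 6$, whereas the paper exhibits an auxiliary $k$ with both exponents at most $\lfloor n/2\rfloor-1$; both arguments are sound.
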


\begin{proof}

First note that for $n\geq 3$ one has $\sharp S_{1}=q^{\lfloor \frac{n}{2}\rfloor}<\sharp S_{q^{n-1}}=q^{n-1}$ (Proposition \ref{reduction}), 
so that the minimum in Proposition \ref{reduction} cannot be achieved in $\sharp S_{q^{n-1}}$. Next we distinguish all the cases of the statement and will use Proposition \ref{reduction} without referring to it.

\begin{enumerate}[(1)]
\item The semigroup $\Gamma_1=\mathbb{N}$ has genus $g=0$, and hence $\mathrm E(\Gamma_1,2)=1$ (see \cite{WCC}; also here the multiplicity is $1$, and the cardinality of the Ap\'ery set 1 in $\mathbb N$ is 1). 

\item For $\Gamma_2$, $\sharp S_1=\sharp S_q=q$, and hence $\mathrm E(\Gamma_2,2)=q$.

\item For $\Gamma_3$,  $\sharp S_1=q$. 
Hence $\mathrm E(\Gamma_3,2)=q$. 

\item For $\Gamma_4$, we obtain $\sharp S_1=q$, $\sharp S_q=2q-1$; 
whence $\mathrm E(\Gamma_4,2)=2q-1$. 

\item For $\Gamma_5$, we have $S_1=q^{2}$ and 
$S_q=S_{q^{2}}=q^{2}+q-1$. 
In this case, $\mathrm E(\Gamma_5,2)=q^{2}$.

\item For $n\geq 6$, we claim that the minimum in Proposition \ref{reduction} is achieved is a number of the form 
\[
\sharp S_{q^{n-1-2k}}=(q^{k}-1)+q^{n-1-2k}
\]
for some $k\in\{1,\ldots,\lfloor \frac{n}{2}\rfloor-1\}$. In other words, the minimum is not achieved in $\sharp S_1=q^{\lfloor \frac{n}{2}\rfloor}$. The set $\{\lceil(n-1-\lfloor\frac{n}2\rfloor)/2\rceil,\ldots, \lfloor \frac{n}2\rfloor-1\}$ is not empty for $n\ge 5$. Take $k$ in this set. Then $n-1-2k<\lfloor \frac{n}2\rfloor$ and  
\[
\sharp S_{q^{n-1-2k}}=(q^{k}-1)+q^{n-1-2k}\leq q^{\lfloor\frac{n}{2}\rfloor-1} + q^{\lfloor\frac{n}{2}\rfloor-1} -1 \leq q^{\lfloor\frac{n}{2}\rfloor} - 1 < \sharp S_1.
\]
Thus, all we have to do, after cancelling the $-1$, is to compare possible sums of two powers of $q$ of the form $q^{k}+q^{n-1-2k}$, with $k$ ranging in the above set; and here is when technical Lemma \ref{technical} becomes useful (in fact, we just need part (2) of this lemma). 

We distinguish to cases.

\begin{itemize}
\item If $k\ge n-1-2k$, then the minimum $q^{n-1-2k}+q^k$ is achieved in the least integer $k$ fulfilling the inequality $k\ge n-1-2k$, which is $\lceil \frac{n-1}3\rceil$.

\item If $k\le n-1-2k$, then the minimum $q^{k}+q^{n-1-2k}$ is reached in the largest integer $k$ such that $k\le n-1-2k$. This integer is $\lfloor \frac{n-1}3\rfloor$.
\end{itemize}
So it only remains to compare $q^{n-1-2\lceil\frac{n-1}{3}\rceil}+q^{\lceil\frac{n-1}{3}\rceil}$ with $q^{\lfloor\frac{n-1}{3}\rfloor}+ q^{n-1-2\lfloor\frac{n-1}{3}\rfloor}$. In light of Lemma \ref{technical} again, we only need to compare $\lceil\frac{n-1}{3}\rceil$ with $n-1-2\lfloor\frac{n-1}{3}\rfloor$. By checking with the three different residues of $n-1$ modulo 3, one easily proves that $2\lfloor \frac{n-1}3\rfloor +\lceil \frac{n-1}2\rceil \le n-1$, and thus $\lceil\frac{n-1}{3}\rceil \le  n-1-2\lfloor\frac{n-1}{3}\rfloor$.\qedhere

\end{enumerate}
\end{proof}

\section{Patterns and inductive semigroups}\label{sec:patterns}

If we consider the operation $\Gamma'=a\Gamma\cup(a b+\mathbb N)$, with $a$ and $b$ positive integers with $a\ge 2$ and $b$ greater than the Frobenius number of $\Gamma$, it is known that $\Gamma'$ is Arf if $\Gamma$ is Arf (\cite{Arf}).  In particular, every inductive semigroup $\Gamma_n$ has maximal embedding dimension (the embedding dimension coincides with the multiplicity; see for instance \cite[Chapter 2]{NS}). And a minimal generating system for $\Gamma_n$ is $(\Ap(\Gamma_n, A_1)\setminus\{0\})\cup\{A_1\}$ (\cite[Proposition 3.1]{NS}); this set is described in Proposition \ref{Apery-n}(6).

Now some natural question arise. 

\begin{itemize}

\item Is $\Gamma'$ saturated if $\Gamma$ is saturated? 

\item What kind of patterns are preserved from $\Gamma$ to $\Gamma'$? 

\end{itemize}

A numerical semigroup $\Gamma$ is \emph{saturated} if for every $s\in \Gamma$ and every $s_1,\ldots, s_n\in \Gamma$ with $n\in \mathbb N$ the element $s+z_1s_1+\cdots +z_ns_n$ is again in $\Gamma$ for all $z_1,\ldots, z_n\in \mathbb Z$ with $z_1s_1+\cdots +z_ns_n\ge 0$. It can be shown (see for instance \cite[Proposition 3.34]{NS}) that $\Gamma$ is saturated if and only if for all $s\in \Gamma\setminus\{0\}$, $s+\gcd(\Gamma\cap [0,s])\in \Gamma$. We use this last characterization to prove that the saturated property is also preserved by multiplication.

\begin{proposition}\label{pres-sat}
Let $\Gamma$ be a saturated numerical semigroup, and let $a$ and $b$ be positive integers. Then $\bar{\Gamma}=a\Gamma\cup (b+\mathbb N)$ is also saturated.
\end{proposition}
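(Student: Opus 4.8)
The plan is to verify the saturation criterion recalled just before the statement, namely that a numerical semigroup $S$ is saturated precisely when $s+\gcd(S\cap[0,s])\in S$ for every $s\in S\setminus\{0\}$. (Note first that $\bar\Gamma=a\Gamma\cup(b+\mathbb N)$ is indeed a numerical semigroup for any positive $a,b$: it is cofinite because it contains $b+\mathbb N$, it contains $0=a\cdot 0$, and it is closed under addition since sums landing below $b$ stay in $a\Gamma$ while anything involving an element $\ge b$ lands in $b+\mathbb N$.) So I fix a nonzero $s\in\bar\Gamma$, write $d(s):=\gcd(\bar\Gamma\cap[0,s])$, and split according to whether $s$ lies below $b$ or not.

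The case $s\ge b$ is immediate and uses nothing about $\Gamma$: since $d(s)\ge 0$ and $s\ge b$, the element $s+d(s)$ is at least $b$, hence lies in $b+\mathbb N\subseteq\bar\Gamma$.

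The heart of the argument is the case $s<b$. Here every element of $\bar\Gamma$ not exceeding $s$ must come from the part $a\Gamma$, because all elements of $b+\mathbb N$ strictly exceed $s$; in particular $s=a\sigma$ for some $\sigma\in\Gamma\setminus\{0\}$. I would first establish the identity $\bar\Gamma\cap[0,s]=a\bigl(\Gamma\cap[0,\sigma]\bigr)$, using that $a\gamma\le a\sigma$ is equivalent to $\gamma\le\sigma$. Since $\gcd$ is homogeneous under a common factor, this gives $d(s)=a\cdot\gcd\bigl(\Gamma\cap[0,\sigma]\bigr)$, and therefore $s+d(s)=a\bigl(\sigma+\gcd(\Gamma\cap[0,\sigma])\bigr)$. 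Now invoke the saturation of $\Gamma$ applied to $\sigma\in\Gamma\setminus\{0\}$: the inner factor $\sigma+\gcd(\Gamma\cap[0,\sigma])$ lies in $\Gamma$, and multiplying by $a$ places $s+d(s)$ in $a\Gamma\subseteq\bar\Gamma$, as required.

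Since the two cases exhaust $\bar\Gamma\setminus\{0\}$, the criterion holds for all $s$ and $\bar\Gamma$ is saturated. The only point demanding genuine care is the gcd computation in the second case, that is, checking both that intersecting with $[0,s]$ commutes with multiplication by $a$ and that the factor $a$ can be pulled out of the $\gcd$; everything else is routine. I do \emph{not} expect to need any hypothesis relating $b$ to the Frobenius number of $\Gamma$, since the low-lying elements of $\bar\Gamma$ are governed entirely by $a\Gamma$.
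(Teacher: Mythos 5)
Your proof is correct and follows essentially the same route as the paper's: the same case split on $s\ge b$ versus $s<b$, the same identity $\bar{\Gamma}\cap[0,s]=a\left(\Gamma\cap[0,\sigma]\right)$, and the same appeal to the gcd characterization of saturation. The only difference is that you spell out the routine verifications (that $\bar{\Gamma}$ is a numerical semigroup and that the factor $a$ pulls out of the gcd) which the paper leaves implicit.
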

\begin{proof}
Let $s\in \bar{\Gamma}$. 
If $s\ge b$, then trivially the condition $s+\gcd(\bar{\Gamma}\cap [0,s])\in \bar{\Gamma}$ is fulfilled. So assume that $s<b$. In this case $s\in a\Gamma$, and there exists $t\in \Gamma$ such that $s=at$. Also $\bar{\Gamma}\cap [0,s]=a(\Gamma\cap[0,t])$, and consequently $s+\gcd(\bar{\Gamma}\cap [0,s])=a(t+\gcd(\Gamma\cap [0,t]))\in a\Gamma\subset \bar{\Gamma}$, because $t+\gcd(\Gamma\cap[0,t])\in\Gamma$. 
\end{proof}

\begin{example}\label{ex-q}
It may happen that $a\Gamma\cup (b+\mathbb N)$ is saturated without being $\Gamma$ saturated.

\begin{verbatim}
gap> s:=NumericalSemigroup(3,5);  
<Modular numerical semigroup satisfying 10x mod 15 <= x >
gap> IsSaturatedNumericalSemigroup(s);
false
gap> t:=MultipleOfNumericalSemigroup(s,5,11);
<Numerical semigroup>
gap> IsSaturatedNumericalSemigroup(t);
true
\end{verbatim}

The above example was found with the aid of the computer algebra system \GAP \ and the package {\tt NumericalSgps} (see~\cite{numericalsgps} and \cite{GAP4}). 

\end{example}

The behavior in Example \ref{ex-q} is a consequence of the fact that we are not imposing any condition on $b$, and thus $\Gamma'=a\Gamma\cup(b+\mathbb N)$ for infinitely many numerical semigroups $\Gamma$. This is probably why the authors in \cite{ds} include the condition $b> a f$, with $f$ the Frobenius number of $\Gamma$, for proving Proposition \ref{pres-sat}. This condition is compatible with the definition of inductive semigroup: we are forcing $b_{i+1}\ge a_ib_i$, and $b_i$ is precisely the conductor of $\Gamma_i$.

The condition $b\ge ac$, with $c$ the conductor of $\Gamma$, ensures that $\Gamma$ will be the only numerical semigroup yielding $\Gamma'=a\Gamma\cup (b+\mathbb N)$. In this setting, $\Gamma'/a:=\{x\in \mathbb N\mid ax\in \Gamma'\}$ is precisely $\Gamma$, and thus   \cite[Proposition 2.2]{ds}, asserts that if $\Gamma'$ is saturated, then so is $\Gamma$ (the same for the Arf property: \cite[Proposition 2.3]{ds}).  In this case, we say that $\Gamma'$ is an $a$-\emph{fold} of $\Gamma$.

A \emph{pattern} of length $n$, $n$ a positive integer, is an expression of the form $a_1x_1+\dots +a_nx_n$ with $a_1,\ldots, a_n\in \mathbb Z\setminus\{0\}$ and $x_i$ unknowns. We say that a numerical semigroup $\Gamma$ \emph{admits} the pattern $p$ if for every $s_1,\ldots, s_n\in \Gamma$ with $s_1\ge s_2\ge \dots \ge s_n$, the integer $p(s_1,\ldots,s_n)\in \Gamma$. We say that $p$ is \emph{admissible} if there is at least a numerical semigroup admitting $p$. It follows easily that if $p$ is admissible, then $a_1\ge 0$ (see for instance \cite[Chapter 6, Section 2.4] {NS}). Arf semigroups are precisely the numerical semigroups admitting the pattern $x_1+x_2-x_3$ (or equivalently, $2x_1-x_2$).

For an admissible pattern $p=a_1x_1+\dots +a_nx_n$, define $p'=(a_1-1)x_1+a_2x_2+\dots +a_nx_n$. We say that $p$ is \emph{strongly admissible} if $p'$ is also admissible. It is well known that the set of semigroups admitting a strongly admissible pattern is closed under intersections and the adjoin of the Frobenius number (and thus it is a Frobenius variety; see for instance \cite[Chapter 6]{NS}). We show that this property is also preserved under multiples.

\begin{proposition}
Let $p$ be a strongly admissible pattern. Let $\Gamma$ be a numerical semigroup and $a,b$ be positive integers. Assume that $\Gamma$ admits $p$. Then so does $\bar{\Gamma}=a\Gamma\cup (b+\mathbb N)$.
\end{proposition}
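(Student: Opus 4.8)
The plan is to verify the pattern condition directly for $\bar\Gamma=a\Gamma\cup(b+\mathbb N)$, by taking an arbitrary weakly decreasing tuple $s_1\ge s_2\ge\cdots\ge s_n$ of elements of $\bar\Gamma$ and splitting into cases according to how many of the $s_i$ fall into the ``tail'' $b+\mathbb N$ versus the ``scaled'' part $a\Gamma$. Write $p=a_1x_1+\cdots+a_nx_n$. Since $s_1$ is the largest, the key dichotomy will be whether $s_1\ge b$ or $s_1<b$; and in the latter case every $s_i$ lies in $a\Gamma$, which is the clean case.

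\emph{The clean case.} If all $s_i\in a\Gamma$, write $s_i=a t_i$ with $t_i\in\Gamma$ and $t_1\ge\cdots\ge t_n$ (the scaling by $a>0$ preserves the ordering). Then $p(s_1,\ldots,s_n)=a\,p(t_1,\ldots,t_n)$, and since $\Gamma$ admits $p$ we have $p(t_1,\ldots,t_n)\in\Gamma$, so $p(s_1,\ldots,s_n)\in a\Gamma\subseteq\bar\Gamma$. This disposes of the situation $s_1<b$ entirely, and more generally of any tuple living inside $a\Gamma$.

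\emph{The tail case.} The work is when $s_1\ge b$. Here I would argue that $p(s_1,\ldots,s_n)$ is large, so it lands in $b+\mathbb N$. The natural estimate uses strong admissibility: because $p$ is admissible one has $a_1\ge 1$, and because $p'=(a_1-1)x_1+a_2x_2+\cdots+a_nx_n$ is also admissible, the value $p'(s_1,\ldots,s_n)$ is nonnegative whenever $s_1\ge\cdots\ge s_n\ge 0$ (admissibility applied to any tuple of semigroup elements forces the output to be a nonnegative integer, and one can feed in the elements $s_i$ together with enough large padding, or invoke that admissible patterns evaluate nonnegatively on decreasing nonnegative inputs). Consequently
\[
p(s_1,\ldots,s_n)=s_1+p'(s_1,\ldots,s_n)\ge s_1\ge b,
\]
which places $p(s_1,\ldots,s_n)$ in $b+\mathbb N\subseteq\bar\Gamma$. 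The only remaining subtlety is to confirm that $p(s_1,\ldots,s_n)$ is genuinely an element of $\bar\Gamma$ and not merely a large integer: but every integer $\ge b$ lies in $\bar\Gamma$ by construction, so this is automatic once the lower bound is established.

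\emph{Main obstacle.} The delicate point is justifying $p'(s_1,\ldots,s_n)\ge 0$ for the given (finite, weakly decreasing, nonnegative) inputs, since admissibility is phrased as membership in a semigroup rather than as a numerical inequality. I would handle this by recalling the standard fact (as in the cited \cite[Chapter~6]{NS}) that an admissible pattern evaluated on any weakly decreasing tuple of nonnegative integers returns a nonnegative integer — this is exactly the content that makes $p'$ usable — and apply it to $p'$, whose admissibility is precisely the hypothesis of strong admissibility. Once that nonnegativity is in hand, the two cases combine to show $p(s_1,\ldots,s_n)\in\bar\Gamma$ in all situations, completing the proof.
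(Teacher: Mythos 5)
Your proof is correct and follows essentially the same route as the paper's: the same dichotomy on whether $s_1\ge b$ (in which case $p(s_1,\ldots,s_n)\ge s_1\ge b$ lands in the tail) or $s_1<b$ (in which case all $s_i\in a\Gamma$ and $p(s_1,\ldots,s_n)=a\,p(t_1,\ldots,t_n)\in a\Gamma$). The only difference is cosmetic: the paper simply cites \cite[Lemma 7.15]{NS} for the inequality $p(s_1,\ldots,s_n)\ge s_1$, whereas you sketch its derivation from strong admissibility via $p=x_1+p'$ and the nonnegativity of admissible patterns on weakly decreasing nonnegative inputs — which is exactly the content of that cited lemma.
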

\begin{proof}
Let $p=a_1x_1+\dots+a_nx_n$. It is well known that if $s_1\ge s_2\ge \cdots \ge s_n$, then $p(s_1,\ldots,s_n)\ge s_1$ (\cite[Lemma 7.15]{NS}). Hence whenever $s_1\ge b$, it trivially follows that $p(s_1,\ldots,s_n)\in \bar{\Gamma}$. So assume that $s_1,\ldots,s_n$ are in $\Gamma$ and $b>s_1\ge\cdots \ge s_n$. Then $s_i\in a\Gamma$ for all $i$, and consequently $s_i=at_i$ for some $t_i\in \Gamma$. It follows that $p(s_1,\ldots,s_n)=ap(t_1,\ldots, t_n)$, and as $p(t_1,\ldots, t_n)\in \Gamma$, we deduce that $p(s_1,\ldots, s_n)\in a\Gamma \subset \bar{\Gamma}$.
\end{proof}

Let $p$ be a strongly admissible pattern. Define $\mathcal S(p)$ as the set  of numerical semigroups admitting $p$. We already know that this set has infinitely many elements, as a consequence of \cite[Corollary 15]{patterns}; since this result states that whenever $\Gamma\in \mathcal S(p)$, then so does $\Gamma\setminus\{\mathrm m(\Gamma)\}$. The above proposition also implies this infiniteness condition.

\begin{corollary}
Let $p$ be a strongly admissible pattern. Then $\mathcal S(p)$ contains the class of all inductive numerical semigroups.
\end{corollary}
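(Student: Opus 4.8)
The statement to prove is the final corollary: for a strongly admissible pattern $p$, the set $\mathcal S(p)$ of numerical semigroups admitting $p$ contains every inductive numerical semigroup. The plan is to proceed by induction on the number $n$ of steps in the inductive construction $\Gamma = \Gamma_n = \Gamma(a,b)$, using the immediately preceding proposition as the inductive step.

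First I would set up the base case. Recall that $\Gamma_0 = \mathbb N$, which admits every admissible pattern trivially, since for any $s_1 \ge \cdots \ge s_n$ in $\mathbb N$ and any admissible $p$ one has $p(s_1,\ldots,s_n) \ge s_1 \ge 0$ (using \cite[Lemma 7.15]{NS}), and every nonnegative integer lies in $\mathbb N$. Hence $\mathbb N \in \mathcal S(p)$ for any strongly admissible $p$.

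Next comes the inductive step, which is where the preceding proposition does all the work. Suppose $\Gamma_{i-1} \in \mathcal S(p)$. By the construction of inductive semigroups, $\Gamma_i = a_i \Gamma_{i-1} \cup (a_i b_i + \mathbb N)$, so $\Gamma_i$ is exactly of the form $\bar\Gamma = a\Gamma \cup (b + \mathbb N)$ with $\Gamma = \Gamma_{i-1}$, $a = a_i$, and $b = a_i b_i$. Since $p$ is strongly admissible and $\Gamma_{i-1}$ admits $p$, the preceding proposition applies verbatim and yields that $\Gamma_i = \bar\Gamma$ also admits $p$, i.e. $\Gamma_i \in \mathcal S(p)$. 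Iterating from $i = 1$ up to $i = n$ gives $\Gamma_n \in \mathcal S(p)$, which is the desired conclusion, since an arbitrary inductive semigroup is by definition some $\Gamma_n$.

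This argument is essentially immediate and I expect no genuine obstacle: the entire content lies in the preceding proposition, and the corollary is a clean induction built on top of it. The only point worth a moment's care is checking that the hypotheses of that proposition are met at each stage — in particular that $a_i$ and $a_i b_i$ are positive integers, which is guaranteed by $a_i \ge 2$ and $b_i \ge 1$ in the definition of an inductive semigroup. One should also note that the proposition as stated imposes no lower-bound condition relating $b$ to the conductor of $\Gamma$, so the pattern-admissibility is preserved regardless of the precise inductive inequalities $b_{i+1} \ge a_i b_i$; those inequalities are needed only to ensure $\Gamma_i$ is genuinely a numerical semigroup, not for the preservation of $p$.
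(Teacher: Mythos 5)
Your proof is correct and follows exactly the route the paper intends: the corollary is stated as an immediate consequence of the preceding proposition, applied inductively along the chain $\Gamma_0=\mathbb N, \Gamma_1,\ldots,\Gamma_n$, with the base case $\mathbb N\in\mathcal S(p)$ handled by the admissibility of $p$. Your added remarks on the hypotheses (positivity of $a_i$ and $a_ib_i$, and the role of $b_{i+1}\ge a_ib_i$) are accurate and do not change the argument.
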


\section{Conclusions}\label{sec:examples_conclusions}

Theorem \ref{FR2} in Section \ref{sec:main} provides us with an estimate for the second Hamming weight of codes coming from an inductive tower of function fields. We recall now the definition of the generalized (Hamming) weights. First, we define the support of a linear code $C$ as
\[
{\rm supp}(C):=\{i \mid c_{i}\neq 0\;\;\mbox{for some ${\bf c}\in C$}\}.
\]
Thus, the $r$th generalized weight of $C$ is defined by
\[
{\mathrm d}_{r}(C):=\min\{\sharp\,{\rm supp}(C')\mid \mbox{$C'$ is a linear subcode of $C$ with ${\rm dim}(C')=r$}\}.
\]
In fact, the above definition only makes sense if $r\leq k$, where $k$ is the dimension of $C$.

\begin{theorem}\label{thm-final}

Let $\Gamma_n = \{0=\rho_1<\rho_2<\cdots <\rho_i<\cdots\}$ be the inductive Weierstrass semigroup of the function field ${\mathcal T}_{n}$\/, defined over the finite field $\mathbb{F}_{q}$\/, at the rational place $Q_{n}$\/, as in~\cite{GS}, and let $C_{m}$ be a dual one-point AG code given by the divisor $G=mQ_{n}$ (see \cite{HvLP} for the details), with $m\geq c$, $c$ being the conductor of $\Gamma_n$. Then

\begin{enumerate}[(1)]

%\item $\mathrm E(\Gamma_1,2)=1$. 

\item ${\mathrm d}_{2}(C_{m})\geq m+2-2g+q$ for $n\in\{2,3\}$, 

\item ${\mathrm d}_{2}(C_{m})\geq m+1-2g+2q$ for $n=4$,

\item ${\mathrm d}_{2}(C_{m})\geq m+2-2g+q^{2}$ for $n=5$, 

\item ${\mathrm d}_{2}(C_{m})\geq m+2-2g+q^{\lceil\frac{n-1}{3}\rceil}+q^{n-1-2\lceil\frac{n-1}{3}\rceil}-1$, for $n\geq 6$. 

\end{enumerate}
\end{theorem}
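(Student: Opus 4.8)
The plan is to combine the lower bound for the second Feng-Rao distance stated in Section~\ref{sec:background} with the explicit values of the second Feng-Rao number computed in Theorem~\ref{FR2}. Recall that for any numerical semigroup $\Gamma$ and any $m \ge c$, one has $\delta^{2}(m) \ge m + 1 - 2g + \mathrm E(\Gamma,2)$, and moreover the generalized Feng-Rao distance is a lower bound for the corresponding generalized Hamming weight, so that $\mathrm d_{2}(C_m) \ge \delta^{2}(m)$ for the dual one-point AG code $C_m$ attached to $G = mQ_n$. Chaining these two inequalities yields $\mathrm d_{2}(C_m) \ge m + 1 - 2g + \mathrm E(\Gamma_n,2)$, which is the master inequality from which every case will follow by substituting the value of $\mathrm E(\Gamma_n,2)$ from Theorem~\ref{FR2}.

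First I would handle the cases $n = 4$ and $n \ge 6$, which are the most direct. For $n = 4$, Theorem~\ref{FR2}(3) gives $\mathrm E(\Gamma_4,2) = 2q-1$, so the master inequality reads immediately $\mathrm d_{2}(C_m) \ge m + 1 - 2g + 2q$, matching statement~(2). For $n \ge 6$, Theorem~\ref{FR2}(5) gives $\mathrm E(\Gamma_n,2) = q^{\lceil (n-1)/3 \rceil} + q^{n-1-2\lceil (n-1)/3 \rceil} - 1$, and substituting produces exactly statement~(4), except that the claimed bound has $m+2-2g$ rather than $m+1-2g$. This discrepancy is the point requiring care: the extra $+1$ must come from a sharpening of the generic Feng-Rao inequality in the specific geometric situation of the Garc\'ia--Stichtenoth tower (for instance, from the fact that the relevant order bound counts one additional divisor, or from an index shift between the combinatorial $\delta^2$ and the code-theoretic weight). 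I would make this $+1$ explicit by tracking the exact relation between $\mathrm d_2(C_m)$ and $\delta^2(m)$ for these particular codes, rather than merely invoking the generic bound.

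Next I would dispatch the remaining cases $n \in \{2,3\}$ and $n = 5$ identically, reading off $\mathrm E(\Gamma_2,2) = \mathrm E(\Gamma_3,2) = q$ from Theorem~\ref{FR2}(2) and $\mathrm E(\Gamma_5,2) = q^2$ from Theorem~\ref{FR2}(4), and inserting these into the (sharpened) inequality to obtain statements~(1) and~(3). Note that cases~(1), (3), and~(4) all carry the summand $m+2-2g$, whereas case~(2) carries $m+1-2g$; this asymmetry suggests that the $+2$ holds whenever $n \ne 4$ and the $+1$ holds precisely when $n = 4$, so I would isolate what distinguishes $n=4$ — presumably a parity or divisibility feature of the conductor $c_n$ and the structure of $\Gamma_4$ near its multiplicity — and justify the gain of the second unit in all cases except that one.

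The main obstacle, as flagged above, is justifying the additional $+1$ (and in four of the five cases a $+2$ rather than $+1$) beyond what the generic inequality $\delta^2(m) \ge m+1-2g+\mathrm E(\Gamma_n,2)$ supplies. The combinatorial machinery of the paper only delivers the $+1-2g$ term, so the improvement must be argued either from a finer analysis of the order bound for these specific Weierstrass semigroups, or from properties of the dual AG codes $C_m$ over $\mathbb{F}_q$ themselves. I would therefore concentrate the real work of the proof on pinning down the precise constant, treating the substitution of the $\mathrm E(\Gamma_n,2)$ values from Theorem~\ref{FR2} as routine bookkeeping once that constant is established.
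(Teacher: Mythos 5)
There is a genuine gap, and it is precisely at the point you flagged as the ``main obstacle.'' Your master inequality starts from $\mathrm d_2(C_m)\ge \delta^2(m)$, but the correct order-bound relation for the \emph{dual} one-point code $C_m$ carries an index shift: $\mathrm d_2(C_m)\ge \delta^2(m+1)$. This is the form the paper quotes from \cite{WCC} (and it is consistent with \cite[Theorem 2.8]{KirPel}, restated in the same section as $\mathrm d_r(C_m)\ge\delta_{FR}(m+r)$). Applying the generic bound at $m+1$ gives
\[
\mathrm d_2(C_m)\;\ge\;\delta^2(m+1)\;\ge\;(m+1)+1-2g+\mathrm E(\Gamma_n,2)\;=\;m+2-2g+\mathrm E(\Gamma_n,2),
\]
which is exactly the ``$+2$'' you were looking for. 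No sharpening of the Feng-Rao inequality, no finer analysis of the order bound, and no special geometric input from the Garc\'ia--Stichtenoth tower is needed; the entire proof is this one chain of inequalities followed by substitution of the values of $\mathrm E(\Gamma_n,2)$ from Theorem~\ref{FR2}.

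Your second misstep follows from the first: the supposed asymmetry of the $n=4$ case is illusory. Since $\mathrm E(\Gamma_4,2)=2q-1$, the uniform bound $m+2-2g+\mathrm E(\Gamma_4,2)$ equals $m+2-2g+2q-1=m+1-2g+2q$, which is statement~(2) verbatim; the $-1$ has simply been absorbed into the constant term. The program you outline --- isolating ``what distinguishes $n=4$,'' e.g.\ a parity feature of the conductor $c_n$ --- would be wasted effort, as all five cases are instances of the same formula. Once you replace $\delta^2(m)$ by $\delta^2(m+1)$ in your master inequality, the rest of your argument (reading off the values from Theorem~\ref{FR2} and substituting) is correct and coincides with the paper's proof.
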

\begin{proof}
This is a consequence of the formula ${\mathrm d}_{2}(C_{m})\geq\delta^{2}(m+1)\geq m+2-2g+\mathrm E(\Gamma_n,2)$ from~\cite{WCC}, and Theorem \ref{FR2}. 
\end{proof}

It follows from \cite[Theorem 2.8]{KirPel} that 
\[
{\mathrm d}_{r}(C_{m}) \geq \delta_{FR}(m+r) \geq m+r+1-2g = m+2-2g+(r-1).
\]
Thus, our result improves this bound since 
\[
{\mathrm d}_{r}(C_{m}) \geq \delta_{FR}^{r}(m+1) \geq m+2-2g + E_r,
\]
and actually $E_r \geq r-1$ for every numerical semigroup (see \cite{WCC}).

\begin{remark}\label{IEEE-46}

It was stated in~\cite[Theorem 46]{fr-IEEE} that 
\[
d_{r}(C_m)\geq\delta_{FR}(m+1)+E_r = \delta_{FR}(m+1)+\rho_r
\]
for numerical semigroups with two generators, but the authors missed the condition 
\[
m+1=(c-1)+\rho_k
\]
for every $k\geq 2$, since for those values of $m$ one has $\delta_{FR}^{r}(m+1)=m+2-2g+E_r=\delta_{FR}(m+1)+E_r$ (see~\cite{WCC}). 
Otherwise the inequality may not be true. In fact, this inequality is not true in general for non-symmetric semigroups, like inductive or Arf ones. 
In fact, it is easy to find examples with \GAP \ such that 
\[
\delta_{FR}(m+1)+E_r > \delta_{FR}^{r}(m+1).
\]

\end{remark}

Finally, a generalization of the Griesmer bound for the generalized Hamming weights states that 
\[
{\mathrm d}_{r}(C)\geq\displaystyle\sum_{i=1}^{r-1}\left\lceil\displaystyle\frac{\mathrm d(C)}{q^{i}}\right\rceil,
\]
where $\mathrm d(C)\equiv {\mathrm d}_{1}(C)$ is the minimum distance of the code $C$, which is defined over the finite field $\mathbb{F}_{q}$ (see~\cite{Griesmer}).  In particular, for $r=2$ one has
\[
{\mathrm d}_{2}(C)\geq \mathrm d(C)+\left\lceil\displaystyle\frac{\mathrm d(C)}{q}\right\rceil.
\]
Since we are just using the semigroup for estimating the generalized Hamming weights of the AG code $C_m$, $C_m$ being constructed from an asymptotically good tower function fields as in~\cite{GS}, we have to substitute $\mathrm d(C_{m})$ by the order bound $\delta_{FR}(m+1)$ obtaining, for $r=2$, the bound
\[
{\mathrm d}_{2}(C_m)\geq \mathrm{GOB}(m+1):=\delta_{FR}(m+1)+\left\lceil\displaystyle\frac{\delta_{FR}(m+1)}{q}\right\rceil.
\]

Denote again $\delta_{FR}(m)=\delta(m)$ for simplicity. 
In order to apply the above {\em Griesmer order bound}\/, introduced in~\cite{fr-IEEE}, we need to compute the Feng-Rao distance of inductive semigroups, but this calculation was done in~\cite[Theorem 4.1]{Arf}. Thus, since an inductive semigroup $\Gamma_n$ is Arf, for $m\geq c=\rho_r$ one has: 

\begin{itemize}

\item $\delta(m)=2$ for $c\leq m<m+\rho_2$, 

\item $\delta(m)=2i$ for $c+\rho_i\leq m<\rho_{i+1}$, 

\item $\delta(m)=m+1-2g$ for $m\geq c+\rho_{r}-1=2c-1$. 

\end{itemize}

We use this formulas to compute the following examples.

\begin{example}\label{ex-final-1}

Consider $q=9$. The second inductive semigroup of the tower in~\cite{GS} is 
\[
\Gamma_{2}=\{0,9,18,27,36,45,54,63,72,\rightarrow\}.
\]
In this case $g=64$, $c=72$, and $E_2=9$. Note that these semigroups are not symmetric, and consequently $c<2g$. For the involved AG codes we must consider only $m>2g-2$. Hence we show next the results for the codes with $2g-1\leq m\leq 2c-2$. Note also that for a code $C(m)$ the bounds for the Hamming weights correspond to $m+1$: 

\bigskip 

\begin{tabular}{|c|cccccccccccc|}
\hline 
$m$ & 127 & $\cdots$ & 133 & 134 & 135 & 136 & 137 & 138 & 139 & 140 & 141 & 142 \\
\hline 
\hline 
$m+1-2g+E_2$ & 10 & $\cdots$ & 16 & 17 & 18 & {\bf 19} & {\bf 20} & {\bf 21} & {\bf 22} & {\bf 23} & {\bf 24} & {\bf 25} \\
\hline 
$GOB(m+1)$ & 16 & $\cdots$ & 16 & 18 & 18 & 18 & 18 & 18 & 18 & 18 & 18 & 18 \\
\hline 
\end{tabular}

\end{example}

\begin{example}\label{ex-final-2}

Consider now $q=16$ (note that in~\cite{GS} $q$ must be a square).The second inductive semigroup of the tower is now 
\[
\Gamma_{2}=\{0,16,32,48,64,80,96,112,128,144,160,176,192,208,224, 240,\rightarrow\}.
\]
In this case $g=225$, $c=240$, and $E_2=16$. The results for the codes with $2g-1\leq m\leq 2c-2$ are now: 

%\begin{footnotesize}
%\begin{tabular}{|c|ccccccccccccccccccc|}
%\hline 
%$m$ &  449 & $\cdots$ & 462 & 463 & 464 & 465 & 466 & 467 & 468 & 469 & 470 & 471 & 472 & 473 & 474 & 475 & 476 & 477 & 478 \\
%\hline 
%\hline 
%$m+1-2g+E_2$ & 17 & $\cdots$ & 30 & 31 & 32 & {\bf 33} & {\bf 34} & {\bf 35} & {\bf 36} & {\bf 37} & {\bf 38} & {\bf 39} & {\bf 40} & {\bf 41} & {\bf 42} & {\bf 43} & {\bf 44} & {\bf 45} & {\bf 46} \\
%\hline 
%$GOB(m+1)$ & 30 & $\cdots$ & 30 & 32 & 32 & 32 & 32 & 32 & 32 & 32 & 32 & 32 & 32 & 32 & 32 & 32 & 32 & 32 & 32 \\
%\hline 
%\end{tabular}
%\end{footnotesize}

\bigskip 

\begin{center}
\begin{tabular}{|c|cccccccccc|}
\hline 
$m$ &  449 & $\cdots$ & 462 & 463 & 464 & 465 & 466 & 467 & 468 & 469 \\
\hline 
\hline 
$m+1-2g+E_2$ & 17 & $\cdots$ & 30 & 31 & 32 & {\bf 33} & {\bf 34} & {\bf 35} & {\bf 36} & {\bf 37} \\
\hline 
\hline 
$GOB(m+1)$ & 30 & $\cdots$ & 30 & 32 & 32 & 32 & 32 & 32 & 32 & 32  \\
\hline 
\hline
$m$ & 470 & 471 & 472 & 473 & 474 & 475 & 476 & 477 & 478 & \\
\hline 
\hline 
$m+1-2g+E_2$ & {\bf 38} & {\bf 39} & {\bf 40} & {\bf 41} & {\bf 42} & {\bf 43} & {\bf 44} & {\bf 45} & {\bf 46} & \\
\hline 
$GOB(m+1)$  & 32 & 32 & 32 & 32 & 32 & 32 & 32 & 32 & 32 & \\
\hline 
\end{tabular}
\end{center}

\end{example}

In both examples, the Goppa-like bounds improving the Griesmer-like one are emphasized in bold.

\end{document}